\newcommand{\R}{\mathds{R}}
\newcommand{\C}{\mathds{C}}
\newcommand{\N}{\mathds{N}}
\newcommand{\mrm}[1]{\mathrm{#1}}
\newcommand{\e}{\mrm{e}}
\newcommand{\co}{\colon}
\newcommand{\im}{\mrm{i}}
\newcommand{\dd}{\mrm{d}}
\let\abs=\envert
\newcommand{\aabs}[1]{\bigl\lvert #1\bigr\rvert}
\newcommand{\aaabs}[1]{\left\lvert #1\right\rvert}
\newcommand{\pd}{\partial}
\DeclareMathOperator{\re}{Re}
\DeclareMathOperator{\img}{Im}
\numberwithin{equation}{section}
\newtheorem{thm}{Theorem}[section]
\newtheorem{cor}[thm]{Corollary}
\newtheorem{lem}[thm]{Lemma}
\newtheorem{prop}[thm]{Proposition}
\theoremstyle{definition}
\newtheorem{rem}{Remark}[section]
\newtheorem*{exam}{Example}
\newtheorem*{step}{Step}
\begin{document}
\title[Series expansion for the Fourier transform of a rational function]{Series expansion 
for the Fourier transform of a rational function in three dimensions}
\author{R.~Jur\v{s}\.{e}nas}
\address{Vilnius University, Institute of Theoretical Physics and Astronomy,  
A. Go\v{s}tauto 12, Vilnius 01108, Lithuania}
\email{Rytis.Jursenas@tfai.vu.lt}
\subjclass[2010]{33C65, 33C70, 33C90}
\date{\today.}
\keywords{Rashba--Dresselhaus spin-orbit coupling,
Kamp\'{e} de F\'{e}riet function, Horn function, Generalized Lauricella
function}
\begin{abstract}
In Rashba--Dresselhaus spin-orbit coupled systems, the calculation
of Green's function requires the knowledge of the inverse
Fourier transform of rational function $P(p)/Q(p)$, where
$P(p)$ takes the values $1$ and $p^{2}$, and where
\[
Q(p)=\left(p^{2}-\zeta\right)^{2}-
\alpha^{2}\left(p_{1}^{2}+p_{2}^{2}\right)-\beta^{2}
\]
with suitable parameters
$\alpha$, $\beta\geq0$, $\zeta\in\C$. While a two-dimensional problem,
with $p=(p_{1},p_{2})$, has been recently solved [J.~Br\"{u}ning et al,
J.~Phys. A: Math. Theor. 40 (2007)], its three-dimensional 
analogue, with $p=(p_{1},p_{2},p_{3})$, remains open.
In this paper, a hypergeometric series expansion for the triple integral is provided.
Convergence of the series dependent on the parameters is studied in
detail.
\end{abstract}
\maketitle
\tableofcontents

\newpage{}
\section{Introduction} 
The spectral analysis of ultracold atomic gases is closely
related to the calculation 
of the integral kernel (Green's function) for 
Rashba--Dresselhaus spin-orbit coupled Hamiltonian.
In momentum representation the Hamiltonian is the 
operator of multiplication by
\begin{subequations}\label{eq:Hamilton}
\begin{align}
H_{R}=&\begin{pmatrix}
p^{2}+\beta & -\alpha\left(p_{2}+\im p_{1}\right) \\
-\alpha\left(p_{2}-\im p_{1}\right) & p^{2}-\beta
\end{pmatrix}, \label{eq:Hamilton-R} \\
H_{D}=&\begin{pmatrix}
p^{2}+\beta & -\alpha\left(p_{1}-\im p_{2}\right) \\
-\alpha\left(p_{1}+\im p_{2}\right) & p^{2}-\beta
\end{pmatrix}.
\label{eq:Hamilton-D}
\end{align}
\end{subequations}
The subscript $R$ (resp. $D$) indicates Rashba
(resp. Dresselhaus) type spin-orbit interaction.
Vector $p=(p_{1},p_{2})\in\R^{2}$ or 
$p=(p_{1},p_{2},p_{3})\in\R^{3}$, depending on
the model; see eg \cite{Dalibard11} and the citation therein.
The parameter $\alpha\geq0$ has a meaning of
spin-orbit-coupling strength and $\beta\geq0$ characterizes
the magnetic Zeeman field.

The representatives $H_{R}$ and $H_{D}=UH_{R}U^{*}$,
\[
U\in\left\{(-1)^{n}\e^{\mrm{i}\delta}
\begin{pmatrix}
1 & 0 \\ 0 & \im
\end{pmatrix}\co \delta\geq0,n\in\N_{0}\right\}
\subset U(2),
\] 
of the Hamiltonian (\ref{eq:Hamilton}) are unitarily equivalent.
The Green's function 
\[
\mathcal{G}_{R}(x)\equiv 
\mathcal{G}_{R}(x;\alpha,\beta,\zeta),
\quad
x=(x_{1},x_{2},x_{3})\in\R^{3}-\{0\}
\]
for $H_{R}$ is 
\begin{equation}
\mathcal{G}_{R}(x)=
\begin{pmatrix}
G_{2}(x)-\beta G_{1}(x) &
-\alpha D_{-}G_{1}(x) \\
\alpha D_{+}G_{1}(x) & G_{2}(x)+\beta G_{1}(x)
\end{pmatrix}
\label{eq:Green}
\end{equation}
where 
\begin{subequations}\label{eq:Integral}
\begin{align}
G_{1}(x)\equiv& G_{1}(x;\alpha,\beta,\zeta)=
\int_{\R^{3}}\frac{\dd^{3}p}{(2\pi)^{3}}\e^{-\im p.x}
\frac{ 1 }{ Q(p) }, \label{eq:Integral-1} \\
G_{2}(x)\equiv& G_{2}(x;\alpha,\beta,\zeta)=
\int_{\R^{3}}\frac{\dd^{3}p}{(2\pi)^{3}}\e^{-\im p.x}
\frac{ p^{2}-\zeta }{ Q(p) } \label{eq:Integral-2}
\intertext{where}
D_{\pm}=&\frac{\pd}{\pd x_{1}}\pm\im 
\frac{\pd}{\pd x_{2}}, \quad 
Q(p)=\left(p^{2}-\zeta\right)^{2}-
\alpha^{2}\left(p_{1}^{2}+p_{2}^{2}\right)-\beta^{2}
\nonumber
\end{align}
\end{subequations}
where we have used 
\[
\im\frac{\pd}{\pd x_{j}}G_{1}(x)=
\int_{\R^{3}} \frac{\dd^{3}p}{(2\pi)^{3}}\e^{-\im p.x}
\frac{p_{j}}{Q(p)},\quad  j=1,2.
\]
The complex number $\zeta\in\C$ is in the
resolvent set of $H$ ($=H_{R},H_{D}$). When considered
separately, $G_{1}$ is defined for $x\in\R^{3}$, and
$G_{2}$ for $x\in\R^{3}-\{0\}$. 
The associated Green's function for $H_{D}$ is
$U\mathcal{G}_{R}U^{*}$.

The resolvent set of $H$ consists of complex numbers
$\zeta$ for which the denominator $Q(p)$, $\forall p\in\R^{3}$,
is nonzero:
\begin{equation}
\zeta\in\C-\left[-\Sigma,\infty\right),\quad
\Sigma=\begin{cases}
\beta, & \text{if}\quad \beta>\frac{1}{2}\alpha^{2}, \\
\displaystyle\left(\frac{\beta}{\alpha}\right)^{2}+
\left(\frac{\alpha}{2}\right)^{2}, & 
\text{if}\quad \beta\leq\frac{1}{2}\alpha^{2}.
\end{cases}
\label{eq:zeta}
\end{equation}
Stated otherwise, $\left[-\Sigma,\infty\right)$ is
the essential spectrum of $H$.

In this paper, our principle goal is to examine
(\ref{eq:Green})--(\ref{eq:Integral}) with $\alpha,\beta\geq0$
(due to physical reasons)
and $\zeta$ as in (\ref{eq:zeta}) (to ensure the existence of
$(H-\zeta)^{-1}$).

While for the spinless systems, that is, for $\alpha=0$, the integrals
(\ref{eq:Integral}) are easy to compute for suitable $x$, $\alpha$, $\beta$,
$\zeta$ [rewrite $p$ in spherical
coordinates, integrate over the angles and then apply 
\cite[Eq.~(3.728.2)]{Gradshteyn07}], 
the case when $\alpha>0$ is much more involved.

No evidence of successful algebraic treatment of 
the integrals (\ref{eq:Integral}) has been found so far.
The special case when $\alpha=0$ has been discussed in
\cite{Cacciapuoti09}.
A two dimensional equivalent of (\ref{eq:Integral}), 
with $p=(p_{1},p_{2})\in \R^{2}$ and $x\in\R^{2}$, 
has been computed in
\cite{Bruning07,Exner07,Carlone11}. In fact,
the derivation of the Green's function in dimension two
does not require an explicit calculation of (\ref{eq:Integral}), for
one explores the fact that the square of spin-orbit term 
in the Hamiltonian is just the two-dimensional Laplace operator. 
In other words, the equality $p^{2}=p_{1}^{2}+p_{2}^{2}$ is
critical in dimension two. This is no longer the case in
dimension three, and thus the problem stands in need of
different computational methods.

In this paper, a hypergeometric series expansion for 
(\ref{eq:Green})--(\ref{eq:zeta}) is provided
(\S\S \ref{sec:3}--\ref{sec:4},\ref{sec:7}).
Convergence conditions on $\alpha$, $\beta$, $\zeta$
are studied in detail. Some properties and special cases
of the series are also discussed (\S\S \ref{sec:4},\ref{sec:6}--\ref{sec:7}).
\section{Notation and terminology}
The hypergeometric series to be used are the
Kamp\'{e} de F\'{e}riet function $F_{l:m;n}^{p:q;k}$,
the generalized Lauricella function of two variables
$F_{C:D;D^{\prime}}^{A:
B;B^{\prime}}$, the complete and confluent
Horn, Appell, Humbert functions.

The Kamp\'{e} de F\'{e}riet function $F_{l:m;n}^{p:q;k}$
is defined according to \cite[\S 1.7, Eq.~(16)]{Srivastava84}
\begin{align*}
F_{l:m;n}^{p:q;k}&\left(\begin{array}{rrr}
a_{1},\ldots,a_{p}: & b_{1},\ldots,b_{q}; & c_{1},\ldots,c_{k}; \\
\alpha_{1},\ldots,\alpha_{l}: & \beta_{1},\ldots,\beta_{m}; & 
\gamma_{1},\ldots,\gamma_{n};
\end{array}\zeta_{1},\zeta_{2}\right) \\
&=\sum_{r,s=0}^{\infty}
\frac{ \prod_{j=1}^{p}(a_{j})_{r+s}\prod_{j=1}^{q}(b_{j})_{r}
\prod_{j=1}^{k}(c_{j})_{s} }{ \prod_{j=1}^{l}(\alpha_{j})_{r+s}
\prod_{j=1}^{m}(\beta_{j})_{r}\prod_{j=1}^{n}(\gamma_{j})_{s} }
\frac{\zeta_{1}^{r}\zeta_{2}^{s}}{r!s!}
\end{align*}
where the parentheses indicate the Pochhammer symbol. 
In the present paper, the Kamp\'{e} de F\'{e}riet functions to be used
fulfill $p+q<l+m+1$, $p+k<l+n+1$. This ensures the convergence
for all $\abs{\zeta_{1}},\abs{\zeta_{2}}<\infty$.

The generalized Lauricella function of two variables
$F_{C:D;D^{\prime}}^{A:B;B^{\prime}}$ is due to
Srivastava--Daoust \cite[\S 1.7, Eq.~(18)]{Srivastava84},
\begin{align*}
F_{C:D;D^{\prime}}^{A:B;B^{\prime}}&\left(\begin{array}{rrr}
\left((a):\theta,\theta^{\prime}\right): 
& \left((b):\phi\right); & \left((b^{\prime}):\phi^{\prime}\right); \\
\left((c):\psi,\psi^{\prime}\right): 
& \left((d):\delta\right); & \left((d^{\prime}):\delta^{\prime}\right);
\end{array}\zeta_{1},\zeta_{2}\right) \\
&=\sum_{r,s=0}^{\infty}
\frac{ \prod_{j=1}^{A}(a_{j})_{r\theta_{j}+s\theta_{j}^{\prime}}
\prod_{j=1}^{B}(b_{j})_{r\phi_{j}}
\prod_{j=1}^{B^{\prime}}(b_{j}^{\prime})_{s\phi_{j}^{\prime}} }{ 
\prod_{j=1}^{C}(c_{j})_{r\psi_{j}+s\psi_{j}^{\prime}}
\prod_{j=1}^{D}(d_{j})_{r\delta_{j}}
\prod_{j=1}^{D^{\prime}}(d_{j}^{\prime})_{s\delta_{j}^{\prime}} }
\frac{\zeta_{1}^{r}\zeta_{2}^{s}}{r!s!}
\end{align*}
where the coefficients 
\[
\left(\theta_{j}\co j=1,\ldots,A\right),\quad
\left(\theta_{j}^{\prime}\co j=1,\ldots,A\right),\ldots,
\left(\delta_{j}^{\prime}\co j=1,\ldots,D^{\prime}\right)
\]
are real and positive, and $(a)$ abbreviates the array of $A$ parameters
$a_{1},\ldots,a_{A}$, with similar interpretation for $(b),\ldots,
(d^{\prime})$. In the present paper, the coefficients are such that
the convergence is ensured for all 
$\abs{\zeta_{1}},\abs{\zeta_{2}}<\infty$. For other cases, one is referred to
\cite{Srivastava72,Srivastava92}.

The complete Horn $H_{3}$ function \cite[\S 5.7.1, Eq.~(15)]{Erdelyi53}, the 
confluent Horn $\mrm{H}_{3}$ and $\mrm{H}_{10}$ functions
\cite[\S 5.7.1, Eqs.~(31) and (38)]{Erdelyi53}, the confluent Appell
(or Humbert) $\Xi_{2}$ function
\cite[\S 1.6, Eq.~(44)]{Srivastava84} obey the following
series representation
\begin{align*}
H_{3}(a,b;c;\zeta_{1},\zeta_{2})=&\sum_{m,n=0}^{\infty}
\frac{ (a)_{2m+n}(b)_{n} }{ (c)_{m+n} }
\frac{\zeta_{1}^{m}}{m!}\frac{\zeta_{2}^{n}}{n!},
\quad \abs{\zeta_{1}}<R,\quad \abs{\zeta_{2}}<S, \\
&R+\left(S-\frac{1}{2}\right)^{2}=\frac{1}{4}, \\
\mrm{H}_{3}(a,b;c;\zeta_{1},\zeta_{2})=&\sum_{m,n=0}^{\infty}
\frac{ (a)_{m-n}(b)_{m} }{ (c)_{m} }
\frac{\zeta_{1}^{m}}{m!}\frac{\zeta_{2}^{n}}{n!},\quad
\abs{\zeta_{1}}<1,\quad\abs{\zeta_{2}}<\infty, \\
\mrm{H}_{10}(a;c;\zeta_{1},\zeta_{2})=&\sum_{m,n=0}^{\infty}
\frac{ (a)_{2m-n} }{ (c)_{m} }
\frac{\zeta_{1}^{m}}{m!}\frac{\zeta_{2}^{n}}{n!},\quad
\abs{\zeta_{1}}<\frac{1}{4},\quad\abs{\zeta_{2}}<\infty, \\
\Xi_{2}(a,b;c;\zeta_{1},\zeta_{2})=&\sum_{m,n=0}^{\infty}
\frac{ (a)_{m}(b)_{m} }{ (c)_{m+n} }
\frac{\zeta_{1}^{m}}{m!}\frac{\zeta_{2}^{n}}{n!},\quad
\abs{\zeta_{1}}<1,\quad\abs{\zeta_{2}}<\infty,
\end{align*}
for suitable parameters $a$, $b$, $c\in\C$.

Various properties
of the Appell and Horn confluent functions are derived in
\cite{Srivastava84,Erdelyi53,Kalnins80}. A systematic and modern
treatment of these functions is in \cite{Debiard02,Debiard03}.

Throughout the paper the absence of parameters
in either series is left blank and in that case the value of an
empty product is unity.
\section{Main results}\label{sec:3}
\begin{lem}\label{lem:X}
Let $\zeta=(\zeta_{1},\zeta_{2},\zeta_{3})\in\C^{3}$
and define the triple series
\begin{equation}
X(a,b;\zeta)=\sum_{m,n,p=0}^{\infty}
\frac{ \zeta_{1}^{m}\zeta_{2}^{n}\zeta_{3}^{p} }{ 
m!p!(a)_{2m+n+p}(b)_{m+n} }, 
\label{eq:X}
\end{equation}
$\forall a,b\in\C-\{-n\co n\in\N_{0}\}$. Then, the series 
(\ref{eq:X}) takes the following equivalent representations
\begin{subequations}\label{eq:X123}
\begin{align}
X(a,b;\zeta)=&
\sum_{n=0}^{\infty}
\frac{ \zeta_{1}^{n} }{ n!(a)_{2n}(b)_{n} }F_{1:1;0}^{0:1;0}\left(
\begin{array}{rr}
: & 1;; \\
a+2n: & b+n;;
\end{array}\zeta_{2},\zeta_{3}\right) \label{eq:X1} \\
=&
\sum_{n=0}^{\infty}
\frac{ \zeta_{2}^{n} }{ (a)_{n}(b)_{n} }F_{1:1;0}^{0:0;0}\left(
\begin{array}{rr}
: & ;; \\
\left(a+n:2,1\right): & \left(b+n:1\right);;
\end{array}\zeta_{1},\zeta_{3}\right) \label{eq:X2} \\
=&
\sum_{n=0}^{\infty}
\frac{ \zeta_{3}^{n} }{ n!(a)_{n} }F_{2:0;0}^{0:0;1}\left(
\begin{array}{rrr}
:\, ; & \left(1:1\right); \\
\left(a+n:2,1\right),\left(b:1,1\right):\, ;& ;
\end{array}\zeta_{1},\zeta_{2}\right) \label{eq:X3}
\end{align}
\end{subequations}
$\forall\abs{\zeta}<\infty$. Moreover, $X$ fulfills the recurrence relation
\begin{equation}
X(a,b;\zeta)-
\frac{\zeta_{2}}{ab}X(a+1,b+1;\zeta)=
F_{1:1;0}^{0:0;0}\left(
\begin{array}{rr}
: & ;; \\
\left(a:2,1\right): & \left(b:1\right);;
\end{array}\zeta_{1},\zeta_{3}\right).
\label{eq:recurrX}
\end{equation}
\end{lem}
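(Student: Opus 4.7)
The plan is to verify all assertions by elementary manipulations of Pochhammer symbols; the only preliminary needed is absolute convergence. Because the denominator $(a)_{2m+n+p}(b)_{m+n}$ in (\ref{eq:X}) grows like $(2m+n+p)!(m+n)!$, the triple series defines an entire function of $\zeta\in\C^{3}$, which legitimates every rearrangement of the triple sum into iterated sums that follows.

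To prove the representations (\ref{eq:X1})--(\ref{eq:X3}), I would isolate in turn the $m$-, $n$-, and $p$-summation index as the outer variable, and split both Pochhammer symbols using the identity $(x)_{j+k}=(x)_{j}(x+j)_{k}$. For (\ref{eq:X1}), taking $m$ outside produces $(a)_{2m+n+p}=(a)_{2m}(a+2m)_{n+p}$ and $(b)_{m+n}=(b)_{m}(b+m)_{n}$; the inner double sum in the remaining two variables, after inserting $1=(1)_{r}/r!$ to restore the standard factor $1/(r!\,s!)$ required by the Kampé de Fériet notation, is precisely the $F^{0:1;0}_{1:1;0}$ displayed. The same procedure with different groupings yields (\ref{eq:X2}) --- whose outer index $n$ already carries no factorial in (\ref{eq:X}), so no compensation is needed, while the multiplicities $(2,1)$ and $(1)$ in $(a+n:2,1)$, $(b+n:1)$ encode the coefficients $2r+s$ and $r$ appearing in the split Pochhammers --- and (\ref{eq:X3}), with outer index $p$ and the inner compensating factor $(1)_{s}$ restoring the missing $s!$, which explains the lone numerator parameter $(1:1)$.

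For the recurrence (\ref{eq:recurrX}) I would invoke (\ref{eq:X2}). Its $n=0$ summand is literally the right-hand side of (\ref{eq:recurrX}). The remaining $n\geq 1$ part, after the substitution $n\mapsto n+1$ and the extraction $(a)_{n+1}(b)_{n+1}=ab\,(a+1)_{n}(b+1)_{n}$, equals $\zeta_{2}/(ab)$ times the expansion (\ref{eq:X2}) applied at shifted parameters $a+1$, $b+1$, namely $(\zeta_{2}/(ab))\,X(a+1,b+1;\zeta)$. Transposing this tail to the left-hand side yields (\ref{eq:recurrX}).

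No step is conceptually deep. The only foreseeable source of error is bookkeeping: matching the Srivastava--Daoust multiplicity labels such as $(a+n:2,1)$ or $(b:1,1)$ with the linear combinations of inner summation indices arising from the splits of $(a)_{2m+n+p}$ and $(b)_{m+n}$, and correctly inserting compensating $(1)_{\cdot}$ factors wherever the inner summation variable corresponds to the original $n$, which has no factorial in the denominator of (\ref{eq:X}).
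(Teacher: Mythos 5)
Your proposal is correct and follows exactly the route the paper intends: the author omits the proof precisely because it amounts to the Pochhammer splitting $(x)_{j+k}=(x)_{j}(x+j)_{k}$, the compensating $(1)_{r}$ factors required by the Kamp\'{e} de F\'{e}riet/Srivastava--Daoust normalization $1/(r!\,s!)$, and the reindexing of (\ref{eq:X2}) for the recurrence --- all of which you carry out accurately. Your preliminary remark on absolute convergence (the denominator grows factorially in every index, so $X$ is entire and the rearrangements are legitimate) is a welcome addition that the paper leaves implicit.
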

\begin{lem}\label{lem:Xp}
Let $\zeta=(\zeta_{1},\zeta_{2},\zeta_{3})\in\C^{3}$ and define 
the triple series
\begin{equation}
X^{\prime}(a,b;\zeta)=\sum_{m,n,p=0}^{\infty}
\frac{ \zeta_{1}^{m}\zeta_{2}^{n}\zeta_{3}^{p}
(a)_{2m-n-p} }{ (m-n)!p!(b)_{m} }, 
\label{eq:Xp}
\end{equation}
$\forall a\in\C-\N$, $\forall b\in \C-\{-n\co n\in\N_{0}\}$.
Then, the series (\ref{eq:Xp}) is absolutely convergent if:
\begin{subequations}\label{eq:Xp123}
\begin{align}
X^{\prime}(a,b;\zeta)=&
\sum_{n=0}^{\infty}\frac{ \zeta_{1}^{n}(a)_{2n} }{ n!(b)_{n} }
\Xi_{2}(1,-n;1-a-2n;\zeta_{2},-\zeta_{3}), \nonumber \\
&\abs{\zeta_{1}}<\frac{1}{4},\quad \abs{\zeta_{2}}<2
\quad\text{or} \nonumber \\
&\abs{\zeta_{1}}=\frac{1}{4},\quad \abs{\zeta_{2}}<2,\quad
\re\left(a-b-\frac{1}{2}\right)<0 \label{eq:Xp1} \\
=&
\sum_{n=0}^{\infty}\frac{ (\zeta_{1}\zeta_{2})^{n}
(a)_{n} }{ (b)_{n} }
\mrm{H}_{10}(a+n;b+n;\zeta_{1},\zeta_{3}), \nonumber \\
&\abs{\zeta_{1}}<\frac{1}{4},\quad
\abs{\zeta_{2}}<\frac{ 1+\sqrt{1-4\abs{\zeta_{1}}} }{ 
2\abs{\zeta_{1}} }\quad
\text{or} \nonumber \\
&\abs{\zeta_{1}}<\frac{1}{4},\quad
\abs{\zeta_{2}}=\frac{ 1+\sqrt{1-4\abs{\zeta_{1}}} }{ 
2\abs{\zeta_{1}} },\quad
\re(a-b)<0 \label{eq:Xp2} \\
=&
\sum_{n=0}^{\infty}\frac{ (-\zeta_{3})^{n} }{ n!(1-a)_{n} }
H_{3}(a-n,1;b;\zeta_{1},\zeta_{1}\zeta_{2}), \nonumber \\
&\abs{\zeta_{1}}<R,\quad\abs{\zeta_{1}\zeta_{2}}<S,\quad
R+\left(S-\frac{1}{2}\right)^{2}=\frac{1}{4}, \label{eq:Xp3}
\end{align}
\end{subequations}
$\forall\abs{\zeta_{3}}<\infty$. 
Moreover, $X^{\prime}$ fulfills the recurrence relation
\begin{equation}
X^{\prime}(a,b;\zeta)-\frac{a}{b}\zeta_{1}\zeta_{2}
X^{\prime}(a+1,b+1;\zeta)=
\mrm{H}_{10}(a;b;\zeta_{1},\zeta_{3}),
\label{eq:recurrXp}
\end{equation}
$\abs{\zeta_{1}}<\frac{1}{4}$. The confluence on a pair
$(\zeta_{1},\zeta_{2})$ implies that
\begin{equation}
\lim_{\epsilon\to0}
X^{\prime}\left(a,b;\epsilon\zeta_{1},
\frac{\zeta_{2}}{\epsilon},\zeta_{3}\right)=
\mrm{H}_{3}(a,1;b;\zeta_{1}\zeta_{2},\zeta_{3}),
\label{eq:conflXp}
\end{equation} 
$\abs{\zeta_{1}\zeta_{2}}<1$.
\end{lem}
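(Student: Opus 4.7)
The plan is to establish the three representations \eqref{eq:Xp1}--\eqref{eq:Xp3} by plugging the defining series of the inner special function ($\Xi_{2}$, $\mrm{H}_{10}$, $H_{3}$) into the outer $n$-sum and collapsing the resulting triple sum back onto \eqref{eq:Xp}. Two algebraic identities drive the manipulations: the Pochhammer reflection $(a)_{N-k}=(-1)^{k}(a)_{N}/(1-a-N)_{k}$, valid for $a\in\C-\N$ in the $\Gamma$-sense, and the shift $(a)_{m}(a+m)_{k}=(a)_{m+k}$. For \eqref{eq:Xp1} the $\Xi_{2}$ series terminates in $r$ at $r=n$ because of $(-n)_{r}$, and reflection with $N=2n$, $k=r+s$ converts the Pochhammer quotient into $(a)_{2n-r-s}$; combined with $(-n)_{r}=(-1)^{r}n!/(n-r)!$ and the relabelling $(n,r,s)\to(m,n,p)$ this reproduces \eqref{eq:Xp}. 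For \eqref{eq:Xp2} two applications of the shift identity give $(a)_{n}(a+n)_{2r-s}=(a)_{n+2r-s}$ and $(b)_{n}(b+n)_{r}=(b)_{n+r}$, and the substitution $m=n+r$ is compatible with the restriction $m\geq n$ enforced by $1/(m-n)!$. For \eqref{eq:Xp3} the decisive step is the $\Gamma$-identity $(a-n)_{2r+s}/(1-a)_{n}=(-1)^{n}(a)_{2r+s-n}$, which together with $(-\zeta_{3})^{n}(-1)^{n}=\zeta_{3}^{n}$ and $m=r+s$ yields \eqref{eq:Xp}.

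The convergence domains are determined directly on the defining series \eqref{eq:Xp}: after the reduction $m=n+k$ ($k\geq 0$), Stirling is applied to the resulting factors and optimised in the interior index. This identifies the largest interior region as $|\zeta_{1}||\zeta_{2}|^{2}<|\zeta_{2}|-1$, equivalent to the curve $|\zeta_{2}|<(1+\sqrt{1-4|\zeta_{1}|})/(2|\zeta_{1}|)$ of \eqref{eq:Xp2}. The axis-parallel box of \eqref{eq:Xp1} is a subregion obtained from crude bounding: $|\zeta_{2}|<2$ is exactly the condition that keeps the $r$-series of the inner $\Xi_{2}$ a geometric series (giving $\Xi_{2}^{(n)}\to 2/(2-\zeta_{2})$), and $|\zeta_{1}|<\tfrac{1}{4}$ then comes from the $4^{n}$ growth of $(a)_{2n}/[n!(b)_{n}]$. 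The lens in \eqref{eq:Xp3} is inherited from the native convergence region of the complete Horn $H_{3}$ applied to the arguments $(\zeta_{1},\zeta_{1}\zeta_{2})$. The $\zeta_{3}$-series is entire throughout because of the $1/p!$ factor. On each boundary, the outer term ratio takes the form $1+\alpha/n+O(n^{-2})$ with $\alpha=a-b-\tfrac{1}{2}$ for \eqref{eq:Xp1} and $\alpha=a-b$ for \eqref{eq:Xp2}, and a Gauss/Raabe test supplies the stated sign conditions on $\re(\alpha)$.

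The recurrence \eqref{eq:recurrXp} follows immediately from \eqref{eq:Xp2}: writing $(a)_{n}/(b)_{n}=(a/b)(a+1)_{n-1}/(b+1)_{n-1}$ for $n\geq 1$ and reindexing $n=k+1$ identifies $(a/b)\zeta_{1}\zeta_{2}\,X^{\prime}(a+1,b+1;\zeta)$ with the $n\geq 1$ tail of $X^{\prime}(a,b;\zeta)$, leaving the single $n=0$ term $\mrm{H}_{10}(a;b;\zeta_{1},\zeta_{3})$. The confluence \eqref{eq:conflXp} is a termwise limit in \eqref{eq:Xp}: the substitution $(\zeta_{1},\zeta_{2})\to(\epsilon\zeta_{1},\zeta_{2}/\epsilon)$ inserts $\epsilon^{m-n}$ into the general term, and since the lower triangle $m<n$ is already killed by $1/(m-n)!$, only the diagonal $m=n$ survives, producing $\sum_{n,p}(\zeta_{1}\zeta_{2})^{n}\zeta_{3}^{p}(a)_{n-p}/[p!(b)_{n}]=\mrm{H}_{3}(a,1;b;\zeta_{1}\zeta_{2},\zeta_{3})$; dominated convergence on compacts of $\{|\zeta_{1}\zeta_{2}|<1\}$ legitimises the exchange of limit and sum.

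The main obstacle is the convergence analysis, specifically the sharp boundary subconditions on $\re(a-b)$ in \eqref{eq:Xp1} and \eqref{eq:Xp2}: these rest on a delicate Gauss/Raabe argument that requires precise $n\to\infty$ asymptotics of the inner $\Xi_{2}$ and $\mrm{H}_{10}$ beyond their leading constants, so that the $O(1/n^{2})$ correction of the inner factor does not corrupt the $1/n$ coefficient produced by $(a)_{2n}/[n!(b)_{n}]$. This is the step that cannot be shortcut by the plain ratio test which fixes the interiors.
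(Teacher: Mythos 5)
Your algebraic half matches the paper's. The three representations are obtained there by exactly the Pochhammer manipulations you describe (the paper writes $(a)_{2m-n-p}=(-1)^{n+p}(a)_{2m}/(1-a-2m)_{n+p}$ and $1/(m-n)!=(-m)_{n}/((-1)^{n}m!)$ for \eqref{eq:Xp1}, the substitution $m=n+p+q$ for \eqref{eq:Xp2}, and $(a)_{2l+n-p}/(b)_{l+n}=(-1)^{p}(a-p)_{2l+n}/((1-a)_{p}(b)_{l+n})$ for \eqref{eq:Xp3}); your recurrence and confluence arguments are the "straightforward" steps the paper omits, and they are correct. The gap is in the convergence analysis, which is the actual content of the lemma and occupies essentially all of the paper's proof. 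Concretely: (i) for \eqref{eq:Xp2} your "largest interior region" $\abs{\zeta_{1}}\abs{\zeta_{2}}^{2}<\abs{\zeta_{2}}-1$ is not the stated region --- it forces $\abs{\zeta_{2}}$ to exceed the \emph{smaller} root of $\abs{\zeta_{1}}t^{2}-t+1=0$ and so excludes all $\abs{\zeta_{2}}\leq1$, whereas \eqref{eq:Xp2} asserts convergence for every $\abs{\zeta_{2}}$ below the larger root; only the boundary curve coincides. (ii) For \eqref{eq:Xp3}, saying the $\zeta_{3}$-series "is entire because of the $1/p!$" ignores that the outer coefficient $(-\zeta_{3})^{n}/(n!(1-a)_{n})$ multiplies $H_{3}(a-n,1;b;\zeta_{1},\zeta_{1}\zeta_{2})$, whose growth in $n$ must be controlled; the paper does this by splitting $H_{3}=X_{n}+Y_{n}$, reducing $X_{n}$ to a ${}_{2}F_{1}$ with large parameters, and invoking Luke's asymptotics to show at most geometric growth. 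Nothing in your outline replaces that step.

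(iii) The boundary subconditions do not come from Gauss/Raabe as you claim: with $\abs{a_{n+1}/a_{n}}=1+\re(a-b-\tfrac12)/n+O(n^{-2})$ on $\abs{\zeta_{1}}=\tfrac14$, Raabe yields absolute convergence only for $\re(a-b-\tfrac12)<-1$, not the stated $\re(a-b-\tfrac12)<0$ (similarly for \eqref{eq:Xp2}); the paper instead computes the precise asymptotics of the inner factors --- the limit $\Xi_{2}(1,-n;1-a-2n;\zeta_{2},-\zeta_{3})\to(1-\zeta_{2}/2)^{-1}$, which you do have, and the large-$n$ behaviour of $\mrm{H}_{10}(a+n;b+n;\zeta_{1},\zeta_{3})$ via \cite[\S 7.2]{Luke69} --- and then argues from the vanishing of the general term. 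So the test you name would produce a different condition from the one in the statement, and the asymptotics you acknowledge as "the step that cannot be shortcut" are precisely what is missing. As written, the proposal establishes the formal identities but not the convergence assertions that the lemma is about.
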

\begin{thm}\label{thm:Integral}
Let $\alpha$, $\beta\geq0$, $p=(p_{1},p_{2},p_{3})\in\R^{3}$,
$x\in\R^{3}$, $r=\abs{x}$. Suppose that
$\C-\left[-\Sigma,\infty\right)\ni\zeta$ meets 
at least one set of the following three:
\begin{enumerate}[\upshape ($a$)]
\item $2\beta>\alpha^{2}$,
\quad $\displaystyle\beta\leq\abs{\zeta}<2
\left(\frac{\beta}{\alpha}\right)^{2}$
\item $\abs{\zeta}\geq\Sigma$; the equality is available
only if $0\leq2\beta<\alpha^{2}$
\item $\displaystyle\abs{\zeta}>\max\left(\frac{\beta}{2\sqrt{R}},
\frac{\alpha^{2}}{4S}\right)$, \quad 
$\displaystyle R+\left(S-\frac{1}{2}\right)^{2}=\frac{1}{4}$.
\end{enumerate}
Then 
\begin{align}
G_{1}(x;\alpha,\beta,\zeta)=&
\frac{1}{ 8\pi\sqrt{-\zeta} }X^{\prime}\left(
\frac{1}{2},\frac{3}{2};\frac{\beta^{2}}{ 4\zeta^{2} },
-\frac{\zeta\alpha^{2}}{ \beta^{2} },
\frac{ \zeta r^{2} }{4}\right) \nonumber \\
&-\frac{r}{ 8\pi }X\left(
\frac{3}{2},\frac{3}{2};\frac{\beta^{2}r^{4}}{ 64 },
-\frac{\alpha^{2}r^{2}}{ 16 },
-\frac{ \zeta r^{2} }{4}\right),\quad r\geq0,
\label{eq:Integral1} \\
G_{2}(x;\alpha,\beta,\zeta)=&
\frac{1}{4\pi r}X\left(
\frac{1}{2},\frac{1}{2};\frac{\beta^{2}r^{4}}{ 64 },
-\frac{\alpha^{2}r^{2}}{ 16 },
-\frac{ \zeta r^{2} }{4}\right) \nonumber \\
&-\frac{\sqrt{-\zeta}}{4\pi}X^{\prime}\left(
-\frac{1}{2},\frac{1}{2};\frac{\beta^{2}}{ 4\zeta^{2} },
-\frac{\zeta\alpha^{2}}{ \beta^{2} },
\frac{ \zeta r^{2} }{4}\right),\quad r>0.
\label{eq:Integral2}
\end{align}
Conditions (a), (b) and (c) indicate that $X^{\prime}$ is given by
(\ref{eq:Xp1}), (\ref{eq:Xp2}) and (\ref{eq:Xp3}), respectively.
The series representation for $X$ admits any form
given in (\ref{eq:X123}).
\end{thm}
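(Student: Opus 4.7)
The plan is to expand $1/Q(p)$ as a convergent multi-series in $\alpha^{2}$ and $\beta^{2}$, Fourier-transform term-by-term, and identify the resulting triple sum with the functions $X$ and $X'$ of Lemmas~\ref{lem:X}--\ref{lem:Xp}. Starting from
\[
\frac{1}{Q(p)}=\sum_{k=0}^{\infty}\frac{\beta^{2k}}{\bigl[(p^{2}-\zeta)^{2}-\alpha^{2}(p_{1}^{2}+p_{2}^{2})\bigr]^{k+1}},
\]
I would expand the bracket by the generalized binomial theorem to obtain
\[
\frac{1}{Q(p)}=\sum_{j,k=0}^{\infty}\frac{(j+k)!}{j!\,k!}\,\alpha^{2j}\beta^{2k}\,\frac{(p_{1}^{2}+p_{2}^{2})^{j}}{(p^{2}-\zeta)^{2(j+k+1)}}.
\]
The regions of uniform convergence along the real $p$-line will translate directly into conditions (a)--(c) on $\zeta$, $\alpha$, $\beta$.

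I would then perform the Fourier transform of each monomial. The identity $(p_{1}^{2}+p_{2}^{2})^{j}\e^{-\im p\cdot x}=(-1)^{j}(\pd_{x_{1}}^{2}+\pd_{x_{2}}^{2})^{j}\e^{-\im p\cdot x}$ lets me move the transverse Laplacian outside the integral, reducing each term to a derivative acting on the rotationally invariant integral
\[
J_{s}(r)=\int_{\R^{3}}\frac{\dd^{3}p}{(2\pi)^{3}}\,\frac{\e^{-\im p\cdot x}}{(p^{2}-\zeta)^{s}},
\]
which is standard and proportional to $r^{s-3/2}K_{s-3/2}(\sqrt{-\zeta}\,r)$. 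Using $K_{\nu}(z)=\pi[I_{-\nu}(z)-I_{\nu}(z)]/(2\sin\pi\nu)$ together with the power series of $I_{\pm\nu}$, each $J_{s}$ splits into a piece analytic at $r=0$ (which will feed the ``$X$'' summand of (\ref{eq:Integral1})) and a piece carrying the factor $1/\sqrt{-\zeta}$ (which will feed the ``$X'$'' summand). For $G_{2}$ the extra factor $p^{2}-\zeta$ in the numerator shifts $s\mapsto s-1$, replacing the Pochhammer parameters $(3/2,3/2)$ and $(1/2,3/2)$ by $(1/2,1/2)$ and $(-1/2,1/2)$, which gives (\ref{eq:Integral2}).

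Next I would consolidate the resulting triple sum. Using the Pochhammer duplication $(s)_{2n}=4^{n}(s/2)_{n}((s+1)/2)_{n}$, the shift rule $(s)_{m+n}=(s)_{m}(s+m)_{n}$, and a reindexing of the summation variables, I would match the sum exactly with the defining forms (\ref{eq:X}) and (\ref{eq:Xp}) and read off the arguments $(\beta^{2}r^{4}/64,\,-\alpha^{2}r^{2}/16,\,-\zeta r^{2}/4)$ for $X$ and $(\beta^{2}/(4\zeta^{2}),\,-\zeta\alpha^{2}/\beta^{2},\,\zeta r^{2}/4)$ for $X'$, together with the prefactors $r/(8\pi)$ and $1/(8\pi\sqrt{-\zeta})$. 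This bookkeeping step is the main obstacle: one has to massage a triple Fourier sum into precisely the form of $X$ and $X'$ with the stated arguments, and it is here that the various Pochhammer identities have to be combined carefully.

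Finally, the choice among the three representations (\ref{eq:Xp1})--(\ref{eq:Xp3}) of $X'$ is governed by convergence: substituting $\abs{\zeta_{1}}=\beta^{2}/(4\abs{\zeta}^{2})$ and $\abs{\zeta_{2}}=\alpha^{2}\abs{\zeta}/\beta^{2}$ into the admissible regions listed in Lemma~\ref{lem:Xp} reproduces exactly the three alternatives (a), (b), (c). The series $X$ converges for all finite arguments by construction, so no extra restriction arises from it, and the uniform convergence of the initial expansion of $1/Q(p)$ along $\R^{3}$ follows along the way. Thus, modulo the hypergeometric identification just described, the theorem reduces to the two preceding lemmas.
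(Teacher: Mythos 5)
Your overall strategy does coincide with the paper's: expand $1/Q(p)$ binomially in $\beta^{2}$ and $\alpha^{2}(p_{1}^{2}+p_{2}^{2})$, Fourier-transform term by term into Macdonald functions, split each $K_{\nu}$ into its two power-series halves so that one half produces the $X'$ summand and the other the $X$ summand, and read off conditions ($a$)--($c$) by substituting $\zeta_{1}=\beta^{2}/(4\zeta^{2})$, $\zeta_{2}=-\zeta\alpha^{2}/\beta^{2}$, $\zeta_{3}=\zeta r^{2}/4$ into the convergence domains of Lemma~\ref{lem:Xp} (your closing paragraph has this right; your opening claim that ($a$)--($c$) come from uniform convergence of the expansion of $1/Q$ along the real $p$-line is not where they come from and is in tension with that closing paragraph). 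However, two steps do not go through as written. First, the transverse-Laplacian device: $(\pd_{x_{1}}^{2}+\pd_{x_{2}}^{2})^{j}$ applied to the radial function $J_{s}(r)$ is \emph{not} a function of $r$ alone --- already for $j=1$ one gets $f''(r)\,\rho^{2}/r^{2}+f'(r)\left(2/r-\rho^{2}/r^{3}\right)$ with $\rho^{2}=x_{1}^{2}+x_{2}^{2}$, reflecting the fact that the integrand is only axially symmetric. So the individual terms do not reduce to radial Bessel expressions, and the advertised term-by-term identification with $X$ and $X'$ evaluated at the purely radial arguments $\beta^{2}r^{4}/64$, $-\alpha^{2}r^{2}/16$, $-\zeta r^{2}/4$ cannot be performed as described; some additional argument (or a reduction to $x$ on the third axis) is needed. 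The paper instead keeps $(p_{1}^{2}+p_{2}^{2})^{n}=k^{2n}\sin^{2n}\vartheta$ inside the angular integral, obtaining $\frac{\sqrt{\pi}\,n!}{\Gamma(n+\frac{3}{2})}\,{}_{0}F_{1}\left(;n+\tfrac{3}{2};-\tfrac{1}{4}k^{2}r^{2}\right)$ and hence $K_{2m-n+\frac{1}{2}}(r\sqrt{-\zeta})$ with an $n$-dependent order; this is what makes the subsequent resummation possible.

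Second, the step you label ``bookkeeping'' is the actual substance of the proof: it is the paper's Lemma~\ref{lem:K1K2}, whose demonstration requires the explicit two-part series for $K_{\nu}$, the conversion of the factorial ratios into Pochhammer symbols, and a chain of reindexings ($l\to l-m$, then $m=p+n$, $l=m+q$, using the vanishing of $1/(1)_{l-m}$ and of $1/p!$ for negative indices to restart the summations at zero) before the triple sums match the defining forms (\ref{eq:X}) and (\ref{eq:Xp}) with exactly the stated arguments and the parameter pairs $(\tfrac{3}{2},\tfrac{3}{2})$, $(\tfrac{1}{2},\tfrac{3}{2})$ for $G_{1}$ and $(\tfrac{1}{2},\tfrac{1}{2})$, $(-\tfrac{1}{2},\tfrac{1}{2})$ for $G_{2}$. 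Deferring this identification is deferring the theorem itself: nothing in the proposal verifies that the coefficients actually assemble into $X$ and $X'$ rather than into some other double or triple hypergeometric object. Until that lemma (or an equivalent computation) is carried out, the proposal is a plausible plan rather than a proof.
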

\begin{rem}\label{rem:zetadomain}
The series expansion (\ref{eq:Integral1})--(\ref{eq:Integral2})
for the triple integrals
(\ref{eq:Integral}) remains valid
for complex $\alpha$, $\beta$: In this case, 
parameters $\alpha$, $\beta$ in ($a$)--($c$) must be
replaced by their corresponding absolute values and
the domain (\ref{eq:zeta}) would change due to
the solutions to $Q(p)=0$, $\forall p\in\R^{3}$; these 
have to be excluded from the whole plane $\C$.
\end{rem}
\section{Special cases}\label{sec:4}
For particular values of parameters,
the series representation (\ref{eq:Integral1})--(\ref{eq:Integral2})
of integrals (\ref{eq:Integral}) can be considerably simplified.
For illustrative purposes, we examine the
limits $\alpha\to0$, $\beta\to0$, and $r\to0$.

1. In the limit $\alpha\to0$,
\begin{align}
G_{1}(x;0,\beta,\zeta)=&
\frac{1}{8\pi\sqrt{-\zeta}}
\mrm{H}_{10}\left(\frac{1}{2};\frac{3}{2};
\frac{\beta^{2}}{4\zeta^{2}},\frac{\zeta r^{2}}{4}\right)
\nonumber \\
&-\frac{r}{8\pi}\,
F_{1:1;0}^{0:0;0}\left(
\begin{array}{rr}
: & ;; \\
\left(\frac{3}{2}:2,1\right): & \left(\frac{3}{2}:1\right);;
\end{array}\frac{\beta^{2}r^{4}}{64},
-\frac{\zeta r^{2}}{4}\right)
\label{eq:G1alpha0}
\end{align}
$\forall\beta\geq0$, $\forall\zeta\in\C-
\left[-\beta,\infty\right)$, $\abs{\zeta}>\beta$,
$\forall r\geq0$.
In comparison, a direct calculation of (\ref{eq:Integral-1})
with $\alpha=0$ gives (see also \cite{Cacciapuoti09})
\begin{equation}
G_{1}(x;0,\beta,\zeta)=
\frac{1}{8\pi \beta r}
\left(\e^{ -r\sqrt{-\beta-\zeta} }
-\e^{ -r\sqrt{\beta-\zeta} }\right)
\label{eq:G1alpha0-2}
\end{equation}
$\forall\beta\geq0$, $\forall\zeta\in\C-
\left[-\beta,\infty\right)$, $\forall r\geq0$.
Equation~(\ref{eq:G1alpha0-2}) requires a milder
condition on $\zeta$ than (\ref{eq:G1alpha0}).
Define, for convenience,
\[
\zeta_{1}=\frac{\beta^{2}}{4\zeta^{2}},\quad
\zeta_{2}=\frac{\zeta r^{2}}{4}.
\]
It is not hard to see from elementary algebra that, 
$\forall\abs{\zeta_{1}}<\frac{1}{4}$,
\begin{align}
\mrm{H}_{10}\left(\frac{1}{2};\frac{3}{2};
\zeta_{1},\zeta_{2}\right)=
\sum_{\sigma=\pm1}\frac{ \sigma\sqrt{1+2\sigma 
\sqrt{\zeta_{1}} } }{ 2\sqrt{\zeta_{1}} } 
\,_{0}F_{1}\left(;\frac{3}{2};-\zeta_{2}
\left(1+2\sigma\sqrt{\zeta_{1}}\right)\right).
\label{eq:H10}
\end{align}
The function $_{0}F_{1}$ is entire. If one substituted
the right-hand side of (\ref{eq:H10}) in
(\ref{eq:G1alpha0}), the condition
$\abs{\zeta}>\beta$ would be removed by analytic continuation.

Similar considerations apply to $G_{2}$
at $\alpha=0$. By Theorem~\ref{thm:Integral},
\begin{align}
G_{2}(x;0,\beta,\zeta)=&
\frac{1}{4\pi r}\,F_{1:1;0}^{0:0;0}\left(
\begin{array}{rr}
: & ;; \\
\left(\frac{1}{2}:2,1\right): & \left(\frac{1}{2}:1\right);;
\end{array}\frac{\beta^{2}r^{4}}{64},
-\frac{\zeta r^{2}}{4}\right) \nonumber \\
&-\frac{ \sqrt{-\zeta} }{ 4\pi }
\mrm{H}_{10}\left(-\frac{1}{2};\frac{1}{2};
\frac{\beta^{2}}{4\zeta^{2}},\frac{\zeta r^{2}}{4}\right)
\label{eq:G2alpha0}
\end{align}
$\forall\beta\geq0$, $\forall\zeta\in\C-
\left[-\beta,\infty\right)$, $\abs{\zeta}>\beta$,
$\forall r>0$.
A direct calculation of (\ref{eq:Integral-2})
with $\alpha=0$ gives
\begin{equation}
G_{2}(x;0,\beta,\zeta)=
\frac{1}{8\pi r}
\left(\mrm{e}^{ -r\sqrt{-\beta-\zeta} }
+\mrm{e}^{ -r\sqrt{\beta-\zeta} }\right)
\label{eq:G2alpha0-2}
\end{equation}
$\forall\beta\geq0$, $\forall\zeta\in\C-
\left[-\beta,\infty\right)$, $\forall r>0$.
Again, in view of
\[
\mrm{H}_{10}\left(-\frac{1}{2};\frac{1}{2};
\zeta_{1},\zeta_{2}\right)=
\sum_{\sigma=\pm1}
\frac{ \sqrt{ 1+2\sigma\sqrt{\zeta_{1}} } }{ 2 }\,
_{0}F_{1}\left(;\frac{3}{2};-\zeta_{2}\left(
1+2\sigma\sqrt{\zeta_{1}}\right)\right)
\]
$\forall\abs{\zeta_{1}}<\frac{1}{4}$,
the additional condition $\abs{\zeta}>\beta$ in
(\ref{eq:G2alpha0}) can be relaxed.

2. In the limit $\beta\to0$, $\forall\alpha\geq0$,
$\forall\zeta\in \C-\left[-\frac{1}{4}\alpha^{2},\infty\right)$,
$\abs{\zeta}>\frac{1}{4}\alpha^{2}$,
\begin{align}
G_{1}(x;\alpha,0,\zeta)=&
\frac{1}{ 8\pi\sqrt{-\zeta} }
\mrm{H}_{3}\left(\frac{1}{2},1;\frac{3}{2};
-\frac{\alpha^{2}}{4\zeta},\frac{\zeta r^{2}}{4}
\right) \nonumber \\
&-\frac{r}{8\pi}\,
F_{1:1;0}^{0:1;0}\left(
\begin{array}{rr}
: & 1;; \\
\frac{3}{2}: & \frac{3}{2};;
\end{array}-\frac{\alpha^{2}r^{2}}{16},
-\frac{\zeta r^{2}}{4}\right), \quad
r\geq0, \label{eq:Spec3} \\
G_{2}(x;\alpha,0,\zeta)=&
\frac{1}{4\pi r}\,
F_{1:1;0}^{0:1;0}\left(
\begin{array}{rr}
: & 1;; \\
\frac{1}{2}: & \frac{1}{2};;
\end{array}-\frac{\alpha^{2}r^{2}}{16},
-\frac{\zeta r^{2}}{4}\right) \nonumber \\
&-\frac{\sqrt{-\zeta}}{4\pi}
\mrm{H}_{3}\left(-\frac{1}{2},1;\frac{1}{2};
-\frac{\alpha^{2}}{4\zeta},\frac{\zeta r^{2}}{4}
\right),\quad
r>0. \label{eq:Spec4}
\end{align}
Taking both $\alpha=\beta=0$, we find from
(\ref{eq:G1alpha0})--(\ref{eq:Spec4}) that
\[
G_{1}(x;0,0,\zeta)=\frac{\e^{ -r\sqrt{-\zeta} }}{ 
8\pi\sqrt{-\zeta} }\quad (r\geq0),\quad
G_{2}(x;0,0,\zeta)=\frac{\e^{ -r\sqrt{-\zeta} }}{ 
4\pi r }\quad (r>0)
\]
$\forall\zeta\in\C-\left[0,\infty\right)$. As is seen,
function $G_{2}(x;0,0,\zeta)$ is the Green's function
for the three-dimensional kinetic energy operator
(recall (\ref{eq:Hamilton})).

3. In physical applications, the limit $r\to0$ in Green's
function is usually
necessary for the calculation of point spectrum of
the operator perturbed by the point-interaction
\cite{Albeverio13,Albeverio10,Albeverio00,Malamud12}.

By the theorem,
\begin{equation}
G_{1}(0;\alpha,\beta,\zeta)=\frac{1}{ 8\pi\sqrt{-\zeta} }
H_{3}\left(\frac{1}{2},1;\frac{3}{2};
\frac{\beta^{2}}{4\zeta^{2}},-\frac{\alpha^{2}}{4\zeta}
\right)
\label{eq:Spec1}
\end{equation}
where $\zeta$ meets (\ref{eq:zeta}) and ($c$)
in Theorem~\ref{thm:Integral}. An additional condition~($c$)
ensuring the convergence of the complete Horn $H_{3}$
can be relaxed on account of
\[
H_{3}\left(\frac{1}{2},1;\frac{3}{2};
\frac{\beta^{2}}{4\zeta^{2}},-\frac{\alpha^{2}}{4\zeta}
\right)=
\frac{ 2\sqrt{-\zeta} }{ \alpha }\mrm{artanh}\left(
\frac{\alpha}{\beta}
\sqrt{ \frac{-\zeta}{2}
\left(1-\sqrt{ 1-\left(\frac{ \beta }{ \zeta } \right)^{2} }\right) }\right).
\]
The inverse hyperbolic tangent $\mrm{artanh}$
is defined on $\C-\{-1,1\}$; the latter
is always satisfied for $\zeta$ as in (\ref{eq:zeta}).

Function $G_{2}$ exists
$\forall x\in\R^{3}-\{0\}$, but its renormalized counterpart
\begin{equation}
G_{2}^{\mrm{ren}}(x;\alpha,\beta,\zeta)=
G_{2}(x;\alpha,\beta,\zeta)-
\frac{\e^{ -r\sqrt{-\zeta} }}{ 4\pi r }
\label{eq:Spec2a}
\end{equation}
exists $\forall x\in\R^{3}$. Indeed, by the theorem,
\begin{equation}
G_{2}^{\mrm{ren}}(0;\alpha,\beta,\zeta)=
\frac{\sqrt{-\zeta}}{4\pi}
-\frac{\sqrt{-\zeta}}{4\pi}
H_{3}\left(-\frac{1}{2},1;\frac{1}{2};
\frac{\beta^{2}}{4\zeta^{2}},-\frac{\alpha^{2}}{4\zeta}
\right)
\label{eq:Spec2}
\end{equation}
provided that $\zeta$ is as in (\ref{eq:zeta}) and 
Theorem~\ref{thm:Integral}-($c$). Likewise,
condition~($c$) can be omitted if noting that
\begin{align*}
H_{3}\left(-\frac{1}{2},1;\frac{1}{2};
\frac{\beta^{2}}{4\zeta^{2}},-\frac{\alpha^{2}}{4\zeta}
\right)=&
\sqrt{ \frac{1}{2}
\left(1+\sqrt{ 1-\left(\frac{ \beta }{ \zeta } \right)^{2} }\right) } \\
&-\frac{ \alpha }{ 2\sqrt{-\zeta} }\mrm{artanh}\left(
\frac{\alpha}{\beta}
\sqrt{ \frac{-\zeta}{2}
\left(1-\sqrt{ 1-\left(\frac{ \beta }{ \zeta } \right)^{2} }\right) }\right).
\end{align*}
The function $G_{2}^{\mrm{ren}}$ 
(\ref{eq:Spec2a})--(\ref{eq:Spec2}) appears explicitly in
the theory of singular perturbations \cite{Albeverio00}
when dealing with self-adjoint extensions of operators with
point-interaction; see also \S \ref{sec:8}.
\section{Demonstration of main results}
The proof of Lemma~\ref{lem:X} is straightforward and thus omitted: 
It requires nothing more than the definition of the Kamp\'{e} de F\'{e}riet and Lauricella
functions and elementary rearrangement of summands due to Pochhammer symbol.

The proof of Lemma~\ref{lem:Xp} is more involved.
\begin{proof}
\begin{step}[Series representation]
To prove (\ref{eq:Xp1}), substitute
\[
(a)_{2m-n-p}=\frac{ (-1)^{n+p}(a)_{2m} }{ 
(1-a-2m)_{n+p} },\quad
(m-n)!=(1)_{m-n}=\frac{ (-1)^{n}(1)_{m} }{ (-m)_{n} }
\]
in (\ref{eq:Xp}) and apply the series representation of $\Xi_{2}$.

To prove (\ref{eq:Xp2}), let $m=n+p+q$. Then $q=-n-p,-n-p+1,\ldots$ 
But then
\[
\frac{1}{ (1)_{m-n} }=\frac{1}{ (1)_{p+q} }=0\quad
\text{for}\quad p+q=-1,-2,\ldots
\]
Turns out that $q=-p,-p+1,\ldots$ and, by (\ref{eq:Xp}),
\[
X^{\prime}(a,b;\zeta)=
\sum_{n,p=0}^{\infty}\sum_{q=-p}^{\infty}
\zeta_{1}^{q}\frac{ (\zeta_{1}\zeta_{2})^{n} }{ n! }
\frac{ (\zeta_{1}\zeta_{3})^{p} }{ p! }
\frac{ (1)_{n}(a)_{2q+n+p} }{ (b)_{n+p+q}(1)_{p+q} }.
\]
Let $q=l-p$. Then $l=0,1,\ldots$ and
\begin{align}
X^{\prime}(a,b;\zeta)=&
\sum_{n=0}^{\infty}(\zeta_{1}\zeta_{2})^{n}\sum_{l,p=0}^{\infty}
\frac{ \zeta_{1}^{l} }{ l! }\frac{ \zeta_{3}^{p} }{ p! }
\frac{ (a)_{2l+n-p} }{ (b)_{l+n} } \label{eq:Series} \\
=&\sum_{n=0}^{\infty}(\zeta_{1}\zeta_{2})^{n}
\frac{ (a)_{n} }{ (b)_{n} }\sum_{l,p=0}^{\infty}
\frac{ \zeta_{1}^{l} }{ l! }\frac{ \zeta_{3}^{p} }{ p! }
\frac{ (a+n)_{2l-p} }{ (b+n)_{l} }. \nonumber
\end{align}
The double sum over $l$, $p$ represents the confluent Horn $\mrm{H}_{10}$, thus
showing (\ref{eq:Xp2}).

To prove (\ref{eq:Xp3}), note that
\[
\frac{ (a)_{2l+n-p} }{ (b)_{l+n} }=
\frac{ (-1)^{p}(a-p)_{2l+n} }{ (1-a)_{p}(b)_{l+n} }.
\]
Substitute the right-hand side in (\ref{eq:Series}) and get that the
double sum over $l$, $n$ is the complete Horn $H_{3}$ as in (\ref{eq:Xp3}).
\end{step}
\begin{step}[Convergence]
The formula to be used is 
\cite[\S 2.11, Eq.~(4)]{Luke69}
\begin{equation}
\ln\Gamma(z+a)=
\left(z+a-\frac{1}{2}\right)\ln(z)-z+\frac{1}{2}\ln(2\pi)+
\sum_{k=1}^{\infty}
\frac{ (-1)^{k+1}B_{k+1}(a) }{ k(k+1) }\,z^{-k},
\label{eq:Gamma}
\end{equation}
$\aabs{\arg(z)}\leq\pi-\epsilon$, $\epsilon>0$,
$a$, $z\in\C$, where $B_{k+1}(a)$ is the Bernoulli
polynomial in $a$ of degree $k+1$;
$B_{k}(0)\equiv B_{k}$ is the Bernoulli number.

Another asymptotic formula to be used in the proof
follows from (\ref{eq:Gamma}); see also
\cite[Eq.~(1.4)]{Srivastava72}, \cite[\S 2.11, Eq.~(11)]{Luke69}:
\begin{equation}
\frac{\Gamma(a+z)}{\Gamma(b+z)}=
z^{a-b}\left(1+\frac{(a-b)(a+b-1)}{2z}+
O\left(z^{-2}\right)\right).
\label{eq:gamma}
\end{equation}

The confluent Appell $\Xi_{2}$ function in (\ref{eq:Xp1}) is a polynomial
in $\zeta_{2}$ of degree $n$. As a result, the condition $\abs{\zeta_{2}}<1$
is slightly relaxed. Indeed, by (\ref{eq:gamma}), the Pochhammer symbol
$(-n)_{m}\sim(-n)^{m}$ for $m=0,1,\ldots,n$ as $n\to\infty$. Also,
\[
\frac{1}{(1-a-2n)_{m+p}}=(-1)^{m+p}(2n)^{-m-p}
\left(1+O\left(n^{-1}\right)
\right),\quad\forall m,p\in\N_{0}.
\]
Subsequently, 
\[
\lim_{n\to\infty}\Xi_{2}(1,-n;1-a-2n;\zeta_{2},-\zeta_{3})=
\frac{1}{ 1-\zeta_{2}/2 },\quad\abs{\zeta_{2}}<2.
\]
Define
\[
a_{n}=\frac{ \zeta_{1}^{n}(a)_{2n} }{ n!(b)_{n} }
\Xi_{2}(1,-n;1-a-2n;\zeta_{2},-\zeta_{3}).
\]
Then, $\forall \abs{\zeta_{2}}<2$,
\[
a_{n}\sim \frac{ \Gamma(b) }{ \Gamma\left(\frac{a}{2}\right)
\Gamma\left(\frac{a+1}{2}\right)\left(1-\zeta_{2}/2\right) }\,
(4\zeta_{1})^{n}n^{a-b-\frac{1}{2}}\quad
\text{as}\quad n\to\infty.
\]
But then, $\lim a_{n}=0$ if either $\abs{\zeta_{1}}<\frac{1}{4}$ or
$\abs{\zeta_{1}}=\frac{1}{4}$, $\re\left(a-b-\frac{1}{2}\right)<0$.
The ratio test gives $a_{n+1}/a_{n}\sim 4\zeta_{1}$ as $n\to\infty$.
One deduces that the condition given in (\ref{eq:Xp1}) is necessary and
sufficient for the absolute convergence of the series
$\sum a_{n}$, hence (\ref{eq:Xp1}).

The confluent Horn $\mrm{H}_{10}$ in (\ref{eq:Xp2}) is of the form
\[
\mrm{H}_{10}(a+n;b+n;\zeta_{1},\zeta_{3})=
\frac{\Gamma(b+n)}{\Gamma(a+n)}\sum_{m,p=0}^{\infty}
\frac{\zeta_{1}^{m}}{m!}\frac{\zeta_{3}^{p}}{p!}
\frac{\Gamma(a+2m-p+n)}{\Gamma(b+m+n)},
\]
$\abs{\zeta_{1}}<\frac{1}{4}$. Apply (\ref{eq:Gamma})
to $\Gamma(a+2m-p+n)$ and $\Gamma(b+m+n)$ to obtain
\begin{align*}
&\mrm{H}_{10}(a+n;b+n;\zeta_{1},\zeta_{3})=
\frac{\Gamma(b+n)}{\Gamma(a+n)}n^{a-b}
\sum_{m,p=0}^{\infty}
\frac{(n\zeta_{1})^{m}}{m!}\frac{(\zeta_{3}/n)^{p}}{p!} 
\\
&\times\exp\left(\sum_{k=1}^{\infty}
\frac{ (-1)^{k+1}n^{-k} }{ k(k+1) }
\left(B_{k+1}(a+2m-p)-B_{k+1}(b+m)\right)
\right).
\end{align*}
In the limit $n\to\infty$, $(\zeta_{3}/n)^{p}/p!\sim0$ for $p\in\N$,
and $=1$ for $p=0$. Hence, put $p=0$ in the above equation and
deduce that
\begin{equation}
\mrm{H}_{10}(a+n;b+n;\zeta_{1},\zeta_{3})\sim 
\,_{2}F_{1}\left(\begin{array}{c}
\frac{a+n}{2}, \frac{a+1+n}{2} \\ b+n
\end{array};4\zeta_{1}\right)
\quad\text{as}\quad n\to\infty.
\label{eq:H10sym2}
\end{equation}
Define, for convenience,
\begin{equation}
\alpha=\frac{a}{2},\quad
\beta=1+\frac{a}{2}-b,\quad 
\gamma=\frac{1}{2},\quad
\zeta=4\zeta_{1},\quad
\lambda=\frac{n}{2}.
\label{eq:H10sym3}
\end{equation}
In \cite[\S 7.2, Eq.~(11)]{Luke69} it was shown that,
for large complex $\lambda$,
\begin{align}
&\,_{2}F_{1}\left(\begin{array}{c}
\alpha+\lambda,1+\alpha-\gamma+\lambda \\ 
1+\alpha-\beta+2\lambda
\end{array};\zeta\right)\sim
\frac{ \zeta^{\beta-\lambda}\sqrt{\pi}
\Gamma(1+\alpha-\beta+2\lambda) }{ 
\sqrt{\lambda}
\Gamma(1+\alpha-\gamma+\lambda)
\Gamma(\gamma-\beta+\lambda) }
\nonumber \\
&\times
\frac{ \left( \zeta-1+\sqrt{1-\zeta}
\right)^{ \gamma-\alpha-\beta-\frac{1}{2} } }{ 
\left(1-\sqrt{1-\zeta} \right)^{ \gamma-\frac{1}{2} } }
\quad\text{as}\quad\abs{\lambda}\to\infty,
\label{eq:H10sym4}
\end{align}
$\aabs{\arg(\lambda)}\leq\pi-\delta$, $\delta>0$.
Substitute (\ref{eq:H10sym3})--(\ref{eq:H10sym4}) in
(\ref{eq:H10sym2}) and get that,
$\forall\abs{\zeta_{1}}<\frac{1}{4}$,
\begin{align*}
\mrm{H}_{10}(a+n;b+n;\zeta_{1},\zeta_{3})\sim &
2^{1-b+\frac{a}{2}}\zeta_{1}^{1-b}
\left(4\zeta_{1}-1+\sqrt{ 1-4\zeta_{1} } \right)^{ b-a-1 } \\
&\times\left(1-2\zeta_{1}-\sqrt{ 1-4\zeta_{1} }
\right)^{ \frac{a}{2} } 
\left(\frac{ \sqrt{ 1-2\zeta_{1}-\sqrt{ 1-4\zeta_{1} } } }{ 
\sqrt{2}\zeta_{1} }\right)^{n}
\end{align*}
as $n\to\infty$.
Define 
\[
b_{n}=\frac{ (\zeta_{1}\zeta_{2})^{n}
(a)_{n} }{ (b)_{n} }\mrm{H}_{10}(a+n;b+n;
\zeta_{1},\zeta_{3}),
\]
apply (\ref{eq:gamma}) to $(a)_{n}/(b)_{n}$,
and deduce from the above asymptotic formula that
the series $\sum b_{n}$ is absolutely convergent if
\[
\aaabs{\zeta_{2}\sqrt{ 1-2\zeta_{1}-\sqrt{ 1-4\zeta_{1} } }}
\leq\sqrt{2};
\] 
the equality holds only if
$\re(a-b)<0$, hence (\ref{eq:Xp2}). 
The condition is necessary and sufficient.

The complete Horn $H_{3}$ function in (\ref{eq:Xp3})
can be represented as follows
\begin{align*}
H_{3}(a-n,1;b;x,y)=&\sum_{m,p=0}^{\infty}
C_{mp}x^{m}y^{p},\quad 
C_{mp}=\frac{ \Gamma(a+2m+p-n)\Gamma(b) }{ 
m!\Gamma(a-n)\Gamma(b+m+p) }, \\
&x=\zeta_{1},\quad y=\zeta_{1}\zeta_{2},\quad
\abs{x}<R,\quad\abs{y}<S, \\
&R+\left(S-\frac{1}{2}\right)^{2}=\frac{1}{4}.
\end{align*}
It will be shown that the condition for the convergence of
$H_{3}$ is necessary and sufficient
for the absolute convergence of the third series 
(\ref{eq:Xp3}).

Extract $H_{3}$ as a sum $X_{n}+Y_{n}$, where
\[
X_{n}=\sum_{m=0}^{\infty}\sum_{p=0}^{n}
C_{mp}x^{m}y^{p},\quad 
Y_{n}=\sum_{m=0}^{\infty}\sum_{p=n+1}^{\infty}
C_{mp}x^{m}y^{p}.
\]
The first term explicitly reads
\begin{align*}
X_{n}=&\frac{\Gamma(b)}{\Gamma(a-n)}
\sum_{m=0}^{\infty}\frac{x^{m}}{m!}\left(
\frac{\Gamma(a+2m-n)}{\Gamma(b+m)}+
\frac{\Gamma(a+2m+1-n)}{\Gamma(b+m+1)}y\right. \\
&+\frac{\Gamma(a+2m+2-n)}{\Gamma(b+m+2)}y^{2}
+\ldots+\frac{\Gamma(a+2m-1)}{\Gamma(b+m+n-1)}
y^{n-1} \\
&\left.+\frac{\Gamma(a+2m)}{\Gamma(b+m+n)}
y^{n}\right) \\
\sim&\frac{\Gamma(b)}{\Gamma(a-n)}
\sum_{m=0}^{\infty}\frac{x^{m}}{m!}
\frac{\Gamma(a+2m-n)}{\Gamma(b+m)}
\quad\text{as}\quad n\to\infty
\end{align*}
for it holds $\abs{y}<1$. Hence,
\begin{equation}
X_{n}\sim\, _{2}F_{1}\left(\begin{array}{c}
\frac{a-n}{2},\frac{a+1-n}{2} \\
b
\end{array};4x\right)
\quad\text{as}\quad n\to\infty,
\quad\abs{x}<\frac{1}{4}.
\label{eq:Xn}
\end{equation}
By the Kummer transformation formula
\cite[\S 3.8, Eqs.~(1)--(3)]{Luke69},
\begin{equation}
_{2}F_{1}\left(\begin{array}{c}
\frac{a-n}{2},\frac{a+1-n}{2} \\
b
\end{array};4x\right)=
(1-4x)^{\frac{ n-a-1 }{2}}\,
_{2}F_{1}\left(\begin{array}{c}
b-\frac{a-n}{2},\frac{a+1-n}{2} \\
b
\end{array};\frac{4x}{4x-1}\right).
\label{eq:Xn-1}
\end{equation}
Define, for convenience,
\begin{equation}
\alpha=b-\frac{a}{2},\quad \beta=\frac{a+1}{2},\quad
\gamma=b,\quad \zeta=\frac{1+4x}{1-4x},\quad
\lambda=\frac{n}{2}.
\label{eq:Xn-2}
\end{equation}
Then, for large complex $\lambda$, it was shown that
\cite[\S 7.2, Eq.~(8)]{Luke69},
\begin{align}
&_{2}F_{1}\left(\begin{array}{c}
\alpha+\lambda,\beta-\lambda \\
\gamma
\end{array};\frac{1-\zeta}{2}\right)\sim 
\frac{ 2^{\alpha+\beta-1}\Gamma(1-\beta+\lambda)
\Gamma(\gamma) }{ 
\sqrt{\pi\lambda}\Gamma(\gamma-\beta+\lambda) }
\nonumber \\
&\times\frac{ \left(1+\zeta-\sqrt{\zeta^{2}-1} \right)^{
\gamma-\alpha-\beta-\frac{1}{2} } }{ 
\left(1-\zeta+\sqrt{\zeta^{2}-1} \right)^{
\gamma-\frac{1}{2} } }\left(
\left(\zeta+\sqrt{\zeta^{2}-1}\right)^{\lambda-\beta}\right.
\nonumber \\
&\left.+\e^{\pm\im\pi\left(\gamma-\frac{1}{2}\right) }
\left(\zeta+\sqrt{\zeta^{2}-1}\right)^{-\lambda-\alpha}\right)
\quad\text{as}\quad\abs{\lambda}\to\infty
\label{eq:Xn-3}
\end{align}
where the upper (lower) sign in the exponent is taken if
$\img\zeta>0$ ($\img\zeta<0$), and
\begin{align*}
&-\frac{\pi}{2}-\omega_{2}+\delta<
\arg(\lambda)<\frac{\pi}{2}+\omega_{1}-\delta,
\quad\delta>0, \\
&\omega_{1}=-\arctan\left(\frac{\nu-\pi}{\mu}\right),\quad
\omega_{2}=\arctan\left(\frac{\nu}{\mu}\right),\quad
\nu\geq0, \\
&\omega_{1}=-\arctan\left(\frac{\nu}{\mu}\right),\quad
\omega_{2}=\arctan\left(\frac{\nu+\pi}{\mu}\right),\quad
\nu\leq0,
\end{align*}
and $\mu=\re\xi$, $\nu=\img\xi$, $\zeta=\cosh\xi$.

In the second term, $Y_{n}$, make a substitution
$p=n+1+q$, $q=0,1,\ldots$ and apply
\[
\Gamma(a-n)=\frac{ (-1)^{n}\Gamma(a)
\Gamma(1-a) }{ \Gamma(1-a+n) }.
\]
Then
\[
Y_{n}=\frac{ (-1)^{n}y^{n+1}\Gamma(b) }{ 
\Gamma(a)\Gamma(1-a) }
\sum_{m,q=0}^{\infty}\frac{x^{m}}{m!}y^{q}
\frac{ \Gamma(1-a+n)\Gamma(a+1+2m+q) }{ 
\Gamma(b+m+1+q+n) }.
\]
Apply (\ref{eq:gamma}) and get that
\begin{align*}
Y_{n}\sim& \frac{ (-1)^{n}y^{n+1}\Gamma(b) }{ 
\Gamma(a)\Gamma(1-a) }
\sum_{m,q=0}^{\infty}\frac{x^{m}}{m!}y^{q}
n^{-a-b-m-q}\Gamma(a+1+2m+q)
\end{align*}
as $n\to\infty$. Hence,
\begin{equation}
Y_{n}\sim \frac{a\Gamma(b)}{\Gamma(1-a)}
(-1)^{n}y^{n+1}n^{-a-b}\quad
\text{as}\quad n\to\infty.
\label{eq:Yn}
\end{equation}
Substitute (\ref{eq:Xn-1})--(\ref{eq:Xn-3})
in (\ref{eq:Xn}) and exploit
(\ref{eq:gamma}) to get the asymptotic for $X_{n}$,
then substitute this formula in $H_{3}=X_{n}+Y_{n}$,
where $Y_{n}$ is as in (\ref{eq:Yn}), and get that
\begin{align*}
H_{3}(a-n,1;b;x,y)\sim&
\frac{ 2^{2b-1}\Gamma(b) }{ \sqrt{\pi} }
\frac{ (1-4x)^{ \frac{n-a-1}{2} }
\left(\frac{1}{2}+\sqrt{x} \right)^{ b+\frac{1}{2} } }{ 
\left(2\sqrt{x} \right)^{ b-\frac{1}{2} } } \\
&\times n^{ \frac{1}{2}-b }\left(
\eta(x)^{ \frac{n-a-1}{2} }+\varphi(x)
\eta(x)^{ \frac{a-n}{2}-b }\right) \\
&+\frac{a\Gamma(b)}{\Gamma(1-a)}
(-1)^{n}y^{n+1}n^{-a-b}\quad
\text{as}\quad n\to\infty, \\
\eta(x)=&\frac{ \frac{1}{2}+\sqrt{x} }{ \frac{1}{2}-\sqrt{x} },
\quad
\varphi(x)=\exp\left(\im\pi\left(b-\frac{1}{2}\right)
\mrm{sgn}\img\left(\frac{1+4x}{1-4x} \right)\right).
\end{align*}
Define
\[
c_{n}=\frac{ (-\zeta_{3})^{n} }{ n!(1-a)_{n} }
H_{3}(a-n,1;b;x,y)
\]
and thus deduce that $c_{n}\sim0$ as $n\to\infty$
for all $x$, $y$ in the cone
$R+\left(S-\frac{1}{2}\right)^{2}=\frac{1}{4}$.
This completes the proof of (\ref{eq:Xp123}).

The proof of (\ref{eq:recurrXp})--(\ref{eq:conflXp})
is straightforward.\qedhere
\end{step}
\end{proof}
The proof of Theorem~\ref{thm:Integral} relies on the following
result.
\begin{lem}\label{lem:K1K2}
\begin{align*}
\sum_{m=0}^{\infty}\sum_{n=0}^{m}&
\frac{ m!x^{m}y^{n} }{ (2m+1)!(m-n)! }
K_{2m-n+\frac{1}{2}}(z) \\
=&\sqrt{\frac{\pi}{2z}}X^{\prime}\left(
\frac{1}{2},\frac{3}{2};\frac{x}{z^{2}},\frac{y z}{2},
-\frac{z^{2}}{4}\right)
-\sqrt{\frac{\pi z}{2}}X\left(
\frac{3}{2},\frac{3}{2};\frac{x z^{2}}{16},-\frac{x y z}{8},
\frac{z^{2}}{4}\right), \\
\sum_{m=0}^{\infty}\sum_{n=0}^{m}&
\frac{ m!x^{m}y^{n} }{ (2m)!(m-n)! }
K_{2m-n-\frac{1}{2}}(z) \\
=&\sqrt{\frac{\pi}{2z}}X\left(
\frac{1}{2},\frac{1}{2};\frac{x z^{2}}{16},-\frac{x y z}{8},
\frac{z^{2}}{4}\right)
-\sqrt{\frac{\pi z}{2}}X^{\prime}\left(
-\frac{1}{2},\frac{1}{2};\frac{x}{z^{2}},\frac{y z}{2},
-\frac{z^{2}}{4}\right).
\end{align*}
The variables $(x,y,z)\in\C^{3}$, with $\abs{z}<\infty$, 
fulfill at least one of the following three conditions:
\begin{enumerate}[\upshape ($i$)]
\item $4\abs{x}\leq\abs{z}^{2}$, $\abs{yz}<4$
\item $4\abs{x}<\abs{z}^{2}$, 
$\abs{x y}\leq \abs{z}+\sqrt{ \abs{z}^{2}-4\abs{x} }$
\item $\abs{x}<R\abs{z}^{2}$,
$\abs{xy}<2S\abs{z}$, $R+\left(S-\frac{1}{2}\right)^{2}=\frac{1}{4}$.
\end{enumerate}
Items ($i$), ($ii$) and ($iii$) indicate that 
$X^{\prime}$ obeys the series representation given in
(\ref{eq:Xp1}), (\ref{eq:Xp2}) and (\ref{eq:Xp3}), respectively.
The series representation for $X$ admits any form
given in (\ref{eq:X123}).
\end{lem}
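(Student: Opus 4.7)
The plan is to substitute the closed-form Schl\"afli representation
\[
K_{k+\frac{1}{2}}(z)=\sqrt{\frac{\pi}{2z}}\,\e^{-z}\sum_{r=0}^{k}\frac{(k+r)!}{r!(k-r)!\,(2z)^{r}},\qquad k\in\N_{0},
\]
with $k=2m-n$ into the first identity, and analogously with $k=2m-n-1$ into the second (using $K_{-1/2}=K_{1/2}$ to cover the single exceptional term $m=n=0$). Expanding $\e^{-z}=\sum_{s\geq 0}(-z)^{s}/s!$ turns each double Bessel sum into a quadruple series in $(m,n,r,s)$ whose general term is a monomial in $x,y,z^{s-r}$ weighted by a factorial coefficient.

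Next, I pull out the common prefactor $\sqrt{\pi/(2z)}$ on both sides, using $\sqrt{\pi z/2}=z\,\sqrt{\pi/(2z)}$. The task then reduces, for the first identity, to the purely algebraic equality
\[
\Sigma(x,y,z)=X'\!\left(\tfrac{1}{2},\tfrac{3}{2};\tfrac{x}{z^{2}},\tfrac{yz}{2},-\tfrac{z^{2}}{4}\right)-z\,X\!\left(\tfrac{3}{2},\tfrac{3}{2};\tfrac{xz^{2}}{16},-\tfrac{xyz}{8},\tfrac{z^{2}}{4}\right),
\]
where $\Sigma(x,y,z)$ is the quadruple series just obtained, and similarly for the second identity. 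Both sides are then recast in a common Pochhammer form through the Legendre duplication identities $(2k)!=4^{k}k!\,(\tfrac{1}{2})_{k}$ and $(2k+1)!=4^{k}k!\,(\tfrac{3}{2})_{k}$, after which every term of $\Sigma$ matches exactly one term either of $X'$ or of $zX$; the convention $1/(m-n)!=0$ for $n>m$ enforces the correct index ranges automatically.

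Convergence is inherited from Lemma~\ref{lem:Xp} through the substitutions $\zeta_{1}=x/z^{2}$, $\zeta_{2}=yz/2$, $\zeta_{3}=-z^{2}/4$: $\abs{\zeta_{1}}\leq\tfrac{1}{4}$ and $\abs{\zeta_{2}}<2$ of~(\ref{eq:Xp1}) become $4\abs{x}\leq\abs{z}^{2}$ and $\abs{yz}<4$, i.e.\ condition~($i$); the radius inequality in~(\ref{eq:Xp2}) rearranges into $\abs{xy}\leq\abs{z}+\sqrt{\abs{z}^{2}-4\abs{x}}$, i.e.\ ($ii$); and the cone $R+(S-\tfrac{1}{2})^{2}=\tfrac{1}{4}$ of~(\ref{eq:Xp3}) maps verbatim to~($iii$). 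The parameter constraints $\re(a-b-\tfrac{1}{2})<0$ and $\re(a-b)<0$ required in the equality cases are automatic here since $a-b-\tfrac{1}{2}=-\tfrac{3}{2}$ and $a-b=-1$ in both $X'(\tfrac{1}{2},\tfrac{3}{2};\cdot)$ and $X'(-\tfrac{1}{2},\tfrac{1}{2};\cdot)$. No extra constraint comes from the $X$-part since $X$ is entire in its arguments by Lemma~\ref{lem:X}.

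The main obstacle is the combinatorial rearrangement separating the quadruple sum into its $X'$- and $X$-contributions. The naive partition by the sign of the exponent $s-r$ of $z$ does not coincide with the $X'/X$ split: rather, one regroups the inner indices $(r,s)$ via substitutions such as $p=r-s$ on the ``negative'' side and $p=s-r$ on the ``positive'' side, and then carefully tracks the Pochhammer shifts $(\tfrac{1}{2})_{2m-n-p}$ and $(\tfrac{3}{2})_{2m+n+p}$, together with the accompanying signs and powers of~$2$. Once this bookkeeping is performed, the identification of series is immediate, and the convergence statement reduces, as explained, to Lemma~\ref{lem:Xp}.
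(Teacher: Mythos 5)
Your overall strategy (expand the half-integer Macdonald function into powers of $z$, regroup the resulting multiple series into the $X^{\prime}$- and $X$-parts, and pull the convergence conditions back from Lemma~\ref{lem:Xp} via $\zeta_{1}=x/z^{2}$, $\zeta_{2}=yz/2$, $\zeta_{3}=-z^{2}/4$) is the same in spirit as the paper's, and your convergence discussion --- including the observation that the equality cases of (\ref{eq:Xp1}) and (\ref{eq:Xp2}) are admissible because $a-b-\frac{1}{2}=-\frac{3}{2}$ and $a-b=-1$, and that the $X$-part imposes no restriction --- matches the paper's treatment. The difference is the starting representation of $K_{\nu}$: the paper uses the two-sided Gamma series $K_{\nu}(z)=\frac{1}{2}\sum_{l}\frac{(-1)^{l}\Gamma(\nu-l)}{l!}(z/2)^{-\nu+2l}+\frac{1}{2}\sum_{l}\frac{(-1)^{l}\Gamma(-\nu-l)}{l!}(z/2)^{\nu+2l}$, under which each power of $z$ carries a \emph{single} Gamma coefficient, so that the first branch is literally $X^{\prime}$ and the second becomes $X$ after the index substitutions $l\to l-m$, $m=p+n$, $l=m+q$ (this last reindexing is what produces the mixed argument $-xyz/8$, i.e.\ the redistribution of $x^{m}y^{n}$ into $(xz^{2}/16)^{m-n}(-xyz/8)^{n}$).

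The genuine gap is in your central combinatorial step. With the Schl\"afli closed form times $\e^{-z}$, the coefficient of a fixed power $z^{s-r}$ for fixed $(m,n)$ is a \emph{finite convolution} over all pairs $(r,s)$ with $s-r$ prescribed, whereas the corresponding term of $X^{\prime}$ or of $zX$ is a single monomial with a single Pochhammer coefficient. Hence your assertion that ``every term of $\Sigma$ matches exactly one term either of $X^{\prime}$ or of $zX$'' is not correct as stated: the matching is many-to-one, and to establish it you must evaluate the inner terminating sum $\sum_{r}\frac{(-1)^{j+r}(k+r)!}{(j+r)!\,r!\,(k-r)!\,2^{r}}$ in closed form --- which is exactly equivalent to re-deriving the ascending/descending series for $K_{k+1/2}$ from the closed form, i.e.\ the identity your proof is supposed to be using. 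Your sketch acknowledges this as ``the main obstacle'' but then simply asserts that ``once this bookkeeping is performed, the identification of series is immediate,'' without performing it; in particular you never address the reindexing $m\mapsto m-n$ that is forced by the fact that the $X$-part carries $x^{m+n}y^{n}$ rather than $x^{m}y^{n}$. Since this rearrangement is the entire content of the lemma (the convergence part being a routine substitution), the proof as written is incomplete; either carry out the inner summation explicitly, or start from the two-sided series representation of $K_{\nu}$, which makes the term-by-term identification genuine.
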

Here $K_{2m-n\pm\frac{1}{2}}$ is the Macdonald function.
\begin{proof}
First, note that ($i$), ($ii$) and ($iii$) are due to
(\ref{eq:Xp1}), (\ref{eq:Xp2}) and (\ref{eq:Xp3}), respectively, by
setting $\zeta_{1}=\frac{x}{z^{2}}$, $\zeta_{2}=\frac{yz}{2}$,
$\zeta_{3}=-\frac{z^{2}}{4}$. The conditions ensure that both
series in the lemma are absolutely convergent.

Substitute the series representation \cite[\S V.5.3, Eq.~(3)]{Vilenkin91}
of Macdonald function
\[
K_{\nu}(z)=
\frac{1}{2}\sum_{l=0}^{\infty}
\frac{ (-1)^{l}\Gamma(\nu-l) }{ l! }\left(\frac{z}{2}\right)^{-\nu+2l}+
\frac{1}{2}\sum_{l=0}^{\infty}
\frac{ (-1)^{l}\Gamma(-\nu-l) }{ l! }\left(\frac{z}{2}\right)^{\nu+2l}
\]
($\nu=2m-n+\frac{1}{2}$) in the left-hand side of the
first series in the lemma and get the expression
\begin{align*}
&\frac{1}{\sqrt{ 2z }}\sum_{m,n,l=0}^{\infty}
\frac{ (4x/z^{2})^{m} }{m!}\frac{ (yz/2)^{n} }{n!}
\frac{ (-z^{2}/4)^{l} }{l!}\left(
\frac{ (m!)^{2}n!\Gamma\left(2m-n+\frac{1}{2}-l \right) }{ 
(2m+1)!\Gamma(m-n+1) }\right) \\
&+\frac{1}{2}\sqrt{\frac{z}{2}}\sum_{m,n,l=0}^{\infty}
\frac{ (xz^{2}/4)^{m} }{m!}\frac{ (2y/z)^{n} }{n!}
\frac{ (-z^{2}/4)^{l} }{l!}\left(
\frac{ (m!)^{2}n!\Gamma\left(-2m+n-\frac{1}{2}-l \right) }{ 
(2m+1)!\Gamma(m-n+1) }\right).
\end{align*}
Write
\begin{align*}
\frac{ (m!)^{2}n!\Gamma\left(2m-n+\frac{1}{2}-l \right) }{ 
(2m+1)!\Gamma(m-n+1) }=&
\frac{ (1)_{m}(1)_{n}\left(\frac{1}{2}\right)_{2m-n-l}\sqrt{\pi} }{ 
4^{m}\left(\frac{3}{2}\right)_{m}(1)_{m-n} }, \\
\frac{ (m!)^{2}n!\Gamma\left(-2m+n-\frac{1}{2}-l \right) }{ 
(2m+1)!\Gamma(m-n+1) }=&
\frac{ 2\sqrt{\pi}(-1)^{n+l+1}(1)_{m}(1)_{n} }{ 4^{m}(1)_{m-n}
\left(\frac{3}{2}\right)_{m}\left(\frac{3}{2}\right)_{2m-n+l} }
\end{align*}
and get the expression
\begin{align*}
&\sqrt{\frac{\pi}{2z}}\sum_{m,n,l=0}^{\infty}
\frac{ (x/z^{2})^{m} }{ m! }\frac{ (yz/2)^{n} }{ n! }
\frac{ (-z^{2}/4)^{l} }{ l! }
\frac{ (1)_{m}(1)_{n}\left(\frac{1}{2}\right)_{2m-n-l} }{ 
\left(\frac{3}{2}\right)_{m}(1)_{m-n} } \\
&-\sqrt{\frac{\pi z}{2}}\sum_{m,n,l=0}^{\infty}
\frac{ (xz^{2}/16)^{m} }{ m! }\frac{ (-2y/z)^{n} }{ n! }
\frac{ (z^{2}/4)^{l} }{ l! }
\frac{ (1)_{m}(1)_{n} }{ 
(1)_{m-n}\left(\frac{3}{2}\right)_{m}\left(\frac{3}{2}\right)_{2m-n+l} }.
\end{align*}
The first triple series represents $X^{\prime}$ (\ref{eq:Xp}) with
\[
a=\frac{1}{2},\quad
b=\frac{3}{2},\quad
\zeta_{1}=\frac{x}{z^{2}},\quad
\zeta_{2}=\frac{yz}{2},\quad
\zeta_{3}=-\frac{z^{2}}{4}.
\]
The second triple series can be rewritten thus:
Make a substitution $l\to l-m$ for $l=m,m+1,\ldots$
and get the series
\[
\sum_{m,n=0}^{\infty}
\sum_{l=m}^{\infty}
\frac{ (x/4)^{m} }{ m! }\frac{ (-2y/z)^{n} }{ n! }
\frac{ (z^{2}/4)^{l} }{ l! } 
\frac{ (1)_{m}(1)_{n}(1)_{l} }{ 
(1)_{m-n}\left(\frac{3}{2}\right)_{m}\left(\frac{3}{2}\right)_{m-n+l}
(1)_{l-m} }.
\]
But $1/(1)_{l-m}=0$ for $l=0,1,\ldots,m-1$. Thus, the sum over
$l=m,m+1,\ldots$ can be replaced with the sum over
$l=0,1,\ldots$. Next, let $m=p+n$, $l=m+q$. Then $l=n+p+q$,
$p=-n,-n+1,\ldots$, $q=0,1,\ldots$ and the general term of 
the above series obeys the form
\[
\frac{a^{m}}{m!}\frac{b^{n}}{n!}\frac{c^{l}}{l!}
\frac{ (1)_{m}(1)_{n}(1)_{l} }{ 
(1)_{m-n}\left(\frac{3}{2}\right)_{m}\left(\frac{3}{2}\right)_{m-n+l}
(1)_{l-m} } 
=\frac{(ac)^{p}}{p!}\frac{(abc)^{n}}{n!}\frac{c^{q}}{q!}
\frac{ (1)_{n} }{ 
\left(\frac{3}{2}\right)_{n+p}\left(\frac{3}{2}\right)_{2p+n+q} }
\]
($a=x/4$, $b=-2y/z$, $c=z^{2}/4$). But
$1/p!=0$ for $p=-n,-n+1,\ldots,-1$. Hence, the sum over
$p=-n,-n+1,\ldots$ can be replaced with the sum over 
$p=0,1,\ldots$ Relabeling $p$ with $m$ and $q$ with $l$, one derives the series
\[
\sum_{m,n,l=0}^{\infty}
\frac{ (xz^{2}/16)^{m} }{ m! }\frac{ (-xyz/8)^{n} }{ n! }
\frac{ (z^{2}/4)^{l} }{ l! }
\frac{ (1)_{n} }{ \left(\frac{3}{2}\right)_{2m+n+l}
\left(\frac{3}{2}\right)_{m+n} }
\]
which is $X$ (\ref{eq:X}) with 
\[
a=\frac{3}{2},\quad
b=\frac{3}{2},\quad
\zeta_{1}=\frac{x z^{2}}{16},\quad
\zeta_{2}=-\frac{x y z}{8},\quad 
\zeta_{3}=\frac{z^{2}}{4}.
\]
This demonstrates the first series in the lemma.
The proof
of the second one is omitted. For, the derivation of the second
series requires no additional ideas but those presented above
(in the series representation of Macdonald function put
$\nu=2m-n-\frac{1}{2}$ and proceed identically as before). 
\end{proof}
To accomplish the proof of the theorem, one needs to show that
the integrals (\ref{eq:Integral1})--(\ref{eq:Integral2}) 
obey the series representation given in
Lemma~\ref{lem:K1K2}.

For this, rewrite $Q(p)^{-1}$ with the help of
binomial series
\begin{align*}
\frac{1}{\left(p^{2}-\zeta\right)^{2}-
\alpha^{2}\left(p_{1}^{2}+p_{2}^{2}\right)-\beta^{2}}=&
\sum_{m=0}^{\infty}
\frac{ \left(\beta^{2}+\alpha^{2}
\left(p_{1}^{2}+p_{2}^{2}\right)\right)^{m} }{ 
\left(p^{2}-\zeta\right)^{2m+2} }, \\
\left(\beta^{2}+\alpha^{2}
\left(p_{1}^{2}+p_{2}^{2}\right)\right)^{m}=&
\sum_{n=0}^{m}\binom{m}{n}\alpha^{2n}\beta^{2m-2n}
\left(p_{1}^{2}+p_{2}^{2}\right)^{n},\quad \forall m\in\N_{0}.
\end{align*}
The convergence of the series is ensured by the convergence of $X^{\prime}$
as it will be shown below.

Rewrite $p=(p_{1},p_{2},p_{3})\in\R^{3}$ in spherical coordinates
$(k,\vartheta,\varphi)$, $k=\abs{p}$, and
substitute the series representation of $Q(p)^{-1}$ in the 
left-hand side of (\ref{eq:Integral1}). Then
\begin{align*}
\text{(\ref{eq:Integral1})}=&
\frac{1}{(2\pi)^{3}}\int_{0}^{2\pi}\dd\varphi\int_{0}^{\pi}\dd\vartheta\,
\sin\vartheta\,\int_{0}^{\infty}\dd k\,k^{2}\e^{-\im kr\cos\vartheta} \\
&\times \sum_{m=0}^{\infty}
\frac{1}{\left(k^{2}-\zeta\right)^{2m+2}}\sum_{n=0}^{m}
\binom{m}{n}\alpha^{2n}\beta^{2m-2n}
k^{2n}\sin^{2n}\vartheta.
\end{align*}
But
\begin{align*}
\int_{0}^{\pi}\dd\vartheta\,\e^{-\im kr\cos\vartheta}
\sin^{2n+1}\vartheta=&
\frac{\sqrt{\pi}n!}{\Gamma\left(n+\frac{3}{2}\right)}
\,_{0}F_{1}\left(;n+\frac{3}{2};-\frac{1}{4}k^{2}r^{2}\right), \\
\int_{0}^{\infty}\dd k\,k^{2n+2}\frac{ _{0}F_{1}\left(;
n+\frac{3}{2};-\frac{1}{4}k^{2}r^{2}\right) }{ 
\left(k^{2}-\zeta\right)^{2m+2} }=&
\frac{\Gamma\left(n+\frac{3}{2}\right)}{(2m+1)!}
(-\zeta)^{ -\frac{1}{4}-m+n/2 }
\left(\frac{r}{2}\right)^{\frac{1}{2}+2m-n} \\
&\times K_{2m-n+\frac{1}{2}}\left(r\sqrt{-\zeta}\right).
\end{align*}
Hence,
\[
\text{(\ref{eq:Integral1})}=
\frac{\sqrt{r}}{4\pi \sqrt{2\pi}\sqrt[4]{-\zeta} }
\sum_{m=0}^{\infty}\sum_{n=0}^{m}
\frac{ m!x^{m}y^{n} }{ (2m+1)!(m-n)! }
K_{2m-n+\frac{1}{2}}(z)
\]
with 
\begin{equation}
x=-\frac{\beta^{2}r^{2}}{4\zeta},\quad
y=\frac{ 2\alpha^{2}\sqrt{-\zeta} }{ \beta^{2}r },\quad
z=r\sqrt{-\zeta}.
\label{eq:xyz}
\end{equation}
Apply the first series in Lemma~\ref{lem:K1K2} and get 
the expression as in the right-hand side of (\ref{eq:Integral1}).

The calculation of the second integral, (\ref{eq:Integral2}), 
is similar, but in this case one infers the integral
\begin{align*}
\int_{0}^{\infty}\dd k\,k^{2n+2}\frac{ _{0}F_{1}\left(;
n+\frac{3}{2};-\frac{1}{4}k^{2}r^{2}\right) }{ 
\left(k^{2}-\zeta\right)^{2m+1} }=&
\frac{\Gamma\left(n+\frac{3}{2}\right)}{(2m)!}
(-\zeta)^{ \frac{1}{4}-m+n/2 }
\left(\frac{r}{2}\right)^{-\frac{1}{2}+2m-n} \\
&\times K_{2m-n-\frac{1}{2}}\left(r\sqrt{-\zeta}\right)
\end{align*}
due to the numerator $p^{2}-\zeta$. Hence,
\[
\text{(\ref{eq:Integral2})}=
\frac{\sqrt[4]{-\zeta}}{2\pi \sqrt{2\pi r} }
\sum_{m=0}^{\infty}\sum_{n=0}^{m}
\frac{ m!x^{m}y^{n} }{ (2m)!(m-n)! }
K_{2m-n-\frac{1}{2}}(z)
\]
with $x$, $y$, $z$ as in (\ref{eq:xyz}).
Apply the second series in Lemma~\ref{lem:K1K2} and get 
the expression as in right-hand side of (\ref{eq:Integral2}). 

Next, substitute
\begin{equation}
\zeta_{1}=\frac{\beta^{2}}{4\zeta^{2}},\quad
\zeta_{2}=-\frac{\zeta\alpha^{2}}{\beta^{2}},\quad 
\zeta_{3}=\frac{\zeta r^{2}}{4}
\label{eq:zeta123}
\end{equation}
in the conditions in (\ref{eq:Xp123}) and get that
(\ref{eq:Xp1})$\Rightarrow$($a$),
(\ref{eq:Xp2})$\Rightarrow$($b$),
(\ref{eq:Xp3})$\Rightarrow$($c$).
This completes
the proof of the theorem and the main results as a whole.
\section{Further properties and corollaries}\label{sec:6}
In this paragraph, some further properties of the series
defined in Lemmas~\ref{lem:X}--\ref{lem:Xp} are discussed.
In particular, the results will be useful for the derivation of
the series representation for the off-diagonal terms of
Green's function (\ref{eq:Green}). These terms will be discussed
in the subsequent paragraph \S \ref{sec:7}.
\begin{prop}\label{prop:X}
Consider $\zeta=(\zeta_{1},\zeta_{2},\zeta_{3})\in\C^{3}$
and define the differential operators
$\partial_{j}=\partial/\partial\zeta_{j}$,
$\partial_{jk}=\partial_{j}\partial_{k}$,
$\forall j,k=1,2,3$. Then,
$\forall a,b\in\C-\{-n\co n\in\N_{0}\}$,
\begin{align}
&\partial_{1}X(a,b;\zeta)=
\frac{1}{ ab(a+1) }X(a+2,b+1;\zeta), \label{eq:pd1X} \\
&\partial_{2}X(a,b;\zeta)=
\frac{1}{ab}X(a+1,b+1;\zeta)
+\frac{\zeta_{2}}{ab}\partial_{2}X(a+1,b+1;\zeta), \label{eq:pd2X} \\
&\partial_{3}X(a,b;\zeta)=
\frac{1}{a}X(a+1,b;\zeta), \label{eq:pd3X} \\
&\left(\partial_{1}+\zeta_{2}\partial_{12}-\partial_{23}\right)
X(a,b;\zeta)=0. \label{eq:diffX}
\end{align}
\end{prop}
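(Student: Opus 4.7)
The plan is to obtain (\ref{eq:pd1X}) and (\ref{eq:pd3X}) directly from the defining series (\ref{eq:X}) by term-by-term differentiation, deduce (\ref{eq:pd2X}) by differentiating the recurrence (\ref{eq:recurrX}) in $\zeta_{2}$, and finally assemble (\ref{eq:diffX}) from these three identities. No analytic subtleties arise because (\ref{eq:X}) is an entire function of $\zeta\in\C^{3}$ under the parameter restriction on $a$, $b$, so termwise differentiation and commutation of the partial derivatives are automatic.

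For (\ref{eq:pd1X}), I would differentiate (\ref{eq:X}) in $\zeta_{1}$, reindex $m\to m+1$, and split the Pochhammer symbols via $(a)_{2m+2+n+p}=a(a+1)(a+2)_{2m+n+p}$ and $(b)_{m+1+n}=b(b+1)_{m+n}$; identifying what remains with the series for $X(a+2,b+1;\zeta)$ yields the stated factor $1/(ab(a+1))$. Identity (\ref{eq:pd3X}) follows in the same way after reindexing $p\to p+1$ and using only $(a)_{2m+n+p+1}=a(a+1)_{2m+n+p}$, with $(b)_{m+n}$ untouched.

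Identity (\ref{eq:pd2X}) is the quickest: the right-hand side of the recurrence (\ref{eq:recurrX}) is the Kamp\'{e} de F\'{e}riet / Lauricella series in $(\zeta_{1},\zeta_{3})$ alone, hence is independent of $\zeta_{2}$. Applying $\partial_{2}$ to both sides of
\[
X(a,b;\zeta)=\frac{\zeta_{2}}{ab}X(a+1,b+1;\zeta)+F_{1:1;0}^{0:0;0}(\cdots;\zeta_{1},\zeta_{3})
\]
immediately produces (\ref{eq:pd2X}).

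For the PDE (\ref{eq:diffX}), the forward-looking calculation is to use (\ref{eq:pd1X}) for $\partial_{1}X(a,b;\zeta)$, apply $\partial_{2}$ to obtain
\[
\partial_{12}X(a,b;\zeta)=\frac{1}{ab(a+1)}\partial_{2}X(a+2,b+1;\zeta),
\]
then use (\ref{eq:pd3X}) followed by (\ref{eq:pd2X}) (with $a,b$ shifted to $a+1,b$) to compute
\[
\partial_{23}X(a,b;\zeta)=\frac{1}{a}\partial_{2}X(a+1,b;\zeta)=\frac{1}{a(a+1)b}X(a+2,b+1;\zeta)+\frac{\zeta_{2}}{a(a+1)b}\partial_{2}X(a+2,b+1;\zeta).
\]
The right-hand side is visibly $\partial_{1}X(a,b;\zeta)+\zeta_{2}\partial_{12}X(a,b;\zeta)$, establishing (\ref{eq:diffX}). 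The only mild obstacle is keeping the parameter shifts consistent while invoking (\ref{eq:pd2X}) at $(a+1,b)$ rather than at $(a,b)$; everything else is bookkeeping with Pochhammer symbols.
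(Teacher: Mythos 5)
Your proposal is correct and matches the paper's (one-line) proof in substance: the paper likewise derives (\ref{eq:pd1X})--(\ref{eq:pd3X}) from the definition (\ref{eq:X}) via elementary Pochhammer manipulations and then obtains (\ref{eq:diffX}) as a consequence of the first three identities. Your route to (\ref{eq:pd2X}) through the recurrence (\ref{eq:recurrX}) is a harmless variant, since that recurrence is itself an immediate consequence of the definition and is established in Lemma~\ref{lem:X} before this proposition, and your parameter bookkeeping in the verification of (\ref{eq:diffX}) checks out.
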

\begin{proof}
To prove equations~(\ref{eq:pd1X})--(\ref{eq:pd3X}),
explore the definition~(\ref{eq:X}) and elementary properties of Pochhammer symbol;
(\ref{eq:pd1X})--(\ref{eq:pd3X})$\Rightarrow$(\ref{eq:diffX}).
\end{proof}
\begin{cor}\label{cor:X}
Let $\zeta=(\zeta_{1},\zeta_{2},\zeta_{3})\in\C^{3}$.
Then, $\forall a,b\in\C-\{-n\co n\in\N_{0}\}$,
\begin{align}
0=&\sum_{n=1}^{\infty}
\frac{\zeta_{2}^{n}}{ (a)_{n}(b)_{n} }
\Biggl(X(a+n,b+n;\zeta) \nonumber \\
&-n\,F_{1:1;0}^{0:0;0}\left(
\begin{array}{rr}
: & ;; \\
\left(a+n:2,1\right): & \left(b+n:1\right);;
\end{array}\zeta_{1},\zeta_{3}
\right)\Biggr), \label{eq:Xgener}
\end{align}
provided each series involved is absolutely convergent. 
\end{cor}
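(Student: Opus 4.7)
The plan is to iterate the recurrence (\ref{eq:recurrX}) after shifting $(a,b)\mapsto(a+n,b+n)$ and then perform an Abel-type rearrangement. For brevity, write $X_n:=X(a+n,b+n;\zeta)$ and let $F_n$ denote the Kampé de Fériet function appearing inside the large parentheses of (\ref{eq:Xgener}); equivalently, $F_n$ is the right-hand side of (\ref{eq:recurrX}) with $(a,b)$ replaced by $(a+n,b+n)$. The shifted recurrence reads
$$X_n-\frac{\zeta_2}{(a+n)(b+n)}\,X_{n+1}=F_n,$$
and multiplying through by $\zeta_2^{\,n}/[(a)_n(b)_n]$, together with the identity $(a)_{n+1}=(a)_n(a+n)$ (and similarly for $b$), converts this into the telescoping first-order identity
$$y_n-y_{n+1}=\frac{\zeta_2^{\,n}}{(a)_n(b)_n}\,F_n,\qquad y_n:=\frac{\zeta_2^{\,n}}{(a)_n(b)_n}\,X_n.$$

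Under the absolute-convergence hypothesis built into the corollary, $\sum_{n\ge1}|y_n|<\infty$, so in particular $y_n\to0$ as $n\to\infty$. Telescoping from $n$ to infinity then yields $y_n=\sum_{k\ge n}(\zeta_2^{\,k}/[(a)_k(b)_k])\,F_k$ for every $n\ge1$. Summing over $n\ge1$ and exchanging the order of summation, legitimate by absolute convergence of both series in (\ref{eq:Xgener}), gives
$$\sum_{n=1}^{\infty}y_n=\sum_{k=1}^{\infty}\Biggl(\sum_{n=1}^{k}1\Biggr)\,\frac{\zeta_2^{\,k}}{(a)_k(b)_k}\,F_k=\sum_{k=1}^{\infty}k\,\frac{\zeta_2^{\,k}}{(a)_k(b)_k}\,F_k,$$
which, moved to a single side, is precisely the assertion (\ref{eq:Xgener}).

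There is no substantive analytic obstacle: once (\ref{eq:recurrX}) is in hand, the content is a telescoping identity together with a Fubini-type swap, both supplied for free by the absolute convergence stipulated in the statement. The only small point to verify is that $y_n\to0$, which is automatic from the summability of $(y_n)$. The combinatorial origin of the factor $n$ in front of $F_n$ is simply that each $F_k$ is collected once for every index $n\in\{1,\ldots,k\}$ when the order of summation is interchanged.
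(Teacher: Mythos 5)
Your argument is correct, and it takes a genuinely different route from the paper. The paper proves the identity by computing $\partial_{2}X(a,b;\zeta)$ in two ways: on one hand it iterates the derivative relation (\ref{eq:pd2X}) to get $\partial_{2}X=\frac{1}{ab}\sum_{n\geq0}\frac{\zeta_{2}^{n}}{(a+1)_{n}(b+1)_{n}}X(a+1+n,b+1+n;\zeta)$, then uses the recurrence (\ref{eq:recurrX}) together with the representation (\ref{eq:X2}) to reduce this to $\sum_{n\geq1}\frac{\zeta_{2}^{n-1}}{(a)_{n}(b)_{n}}X(a+n,b+n;\zeta)$; on the other hand it differentiates (\ref{eq:X2}) term by term to get (\ref{eq:pd2X-0}), and equates the two expressions. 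Your proof dispenses with differentiation altogether: you multiply the shifted recurrence (\ref{eq:recurrX}) by the weight $\zeta_{2}^{n}/[(a)_{n}(b)_{n}]$, observe that it telescopes ($y_{n}-y_{n+1}=c_{n}$ with $c_{n}=\zeta_{2}^{n}F_{n}/[(a)_{n}(b)_{n}]$), sum the tails using $y_{n}\to0$, and interchange the order of summation to produce the factor $n$. What your approach buys is economy of hypotheses and of machinery: you use only (\ref{eq:recurrX}) plus the stipulated absolute convergence (which justifies both $y_{n}\to0$ and the Fubini swap), whereas the paper's route additionally relies on (\ref{eq:pd2X}), on the representation (\ref{eq:X2}), on the vanishing of the remainder when (\ref{eq:pd2X}) is iterated infinitely often, and on term-by-term differentiation --- none of which is explicitly justified there. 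The one point worth making explicit in your write-up is the reading of the hypothesis: "each series involved is absolutely convergent" must be taken to mean that $\sum_{n}\zeta_{2}^{n}X(a+n,b+n;\zeta)/[(a)_{n}(b)_{n}]$ and $\sum_{n}n\zeta_{2}^{n}F_{n}/[(a)_{n}(b)_{n}]$ converge absolutely separately (not merely their difference), since you need the first for $y_{n}\to0$ and the second for the interchange; this is the natural reading and is what the paper intends.
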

\begin{proof}
For convenience, define
\[
F(a,b;\zeta_{1},\zeta_{3})=
\,F_{1:1;0}^{0:0;0}\left(
\begin{array}{rr}
: & ;; \\
\left(a:2,1\right): & \left(b:1\right);;
\end{array}\zeta_{1},\zeta_{3}
\right).
\]
By (\ref{eq:pd2X}),
\begin{align*}
\partial_{2}X(a,b;\zeta)=&
\frac{1}{ab}X(a+1,b+1;\zeta)+
\frac{\zeta_{2}}{a(a+1)b(b+1)}X(a+2,b+2;\zeta)+
\ldots \\
=&\frac{1}{ab}
\sum_{n=0}^{\infty}\frac{\zeta_{2}^{n}}{ (a+1)_{n}
(b+1)_{n} }X(a+1+n,b+1+n;\zeta).
\end{align*}
Then, by (\ref{eq:recurrX}),
\begin{align*}
\partial_{2}X(a,b;\zeta)=&
\frac{1}{ab}
\sum_{n=0}^{\infty}\frac{\zeta_{2}^{n}}{ (a+1)_{n}
(b+1)_{n} }\frac{ (a+n)(b+n) }{ \zeta_{2} } \\
&\times\left(X(a+n,b+n;\zeta)-F(a+n,b+n;\zeta_{1},\zeta_{3})\right) \\
=&\frac{1}{\zeta_{2}}
\sum_{n=0}^{\infty}\frac{\zeta_{2}^{n}}{ (a)_{n}(b)_{n} }
X(a+n,b+n;\zeta)-\frac{1}{\zeta_{2}}X(a,b;\zeta) \\
\Rightarrow&\partial_{2}X(a,b;\zeta)=
\sum_{n=1}^{\infty}\frac{\zeta_{2}^{n-1}}{ (a)_{n}(b)_{n} }
X(a+n,b+n;\zeta).
\end{align*}
But also
\begin{equation}
\partial_{2}X(a,b;\zeta)=
\sum_{n=1}^{\infty}\frac{n\zeta_{2}^{n-1}}{ (a)_{n}(b)_{n} }
F(a+n,b+n;\zeta_{1},\zeta_{3})
\label{eq:pd2X-0}
\end{equation}
by (\ref{eq:X2}). Hence, one derives (\ref{eq:Xgener}).
\end{proof}
\begin{exam}
Substitute $X$ (\ref{eq:X1}) in (\ref{eq:Xgener}) and get that
\begin{align}
\sum_{m,n=0}^{\infty}&
\frac{ \zeta_{1}^{m}\zeta_{2}^{n} }{ 
m!(a)_{2m+n}(b)_{m+n} }
F_{1:1;0}^{0:1;0}\left(
\begin{array}{rr}
: & 1;; \\
a+n+2m: & b+n+m;;
\end{array}\zeta_{2},\zeta_{3}\right) \nonumber \\
=&\sum_{n=0}^{\infty}
\frac{ (n+1)\zeta_{2}^{n} }{ (a)_{n}(b)_{n} }
F_{1:1;0}^{0:0;0}\left(
\begin{array}{rr}
: & ;; \\
\left(a+n:2,1\right): & \left(b+n:1\right);;
\end{array}\zeta_{1},\zeta_{3}\right). \label{eq:Xgener-1}
\end{align}
In particular, set $\zeta_{1}=\zeta_{3}=0$, $\zeta_{2}=\zeta$
in (\ref{eq:Xgener-1}) and get a well-known series identity
\[
\sum_{n=1}^{\infty}\frac{ \zeta^{n} }{ (a)_{n}(b)_{n} }\,
_{1}F_{2}\left(\begin{array}{c}
1 \\
a+n,b+n
\end{array};\zeta\right)=\frac{\zeta}{ab}\,
_{1}F_{2}\left(\begin{array}{c}
2 \\
a+1,b+1
\end{array};\zeta\right).
\]
[The above equation is
easy to prove by representing $_{1}F_{2}$ on the left-hand side as a
series and then by applying the sum rule
\cite[\S 1.6, Eq.~(20)]{Srivastava84}.]

Set $\zeta_{2}=0$ in (\ref{eq:Xgener-1}) and get, after relabeling
the parameters (that is, $\zeta_{1}\to \frac{1}{4}tz^{2}$, $\sqrt{\zeta_{3}}
\to \frac{1}{2}z$, $a\to \nu+1$, $b\to a$), that
\[
\sum_{n=0}^{\infty}\frac{ t^{n}I_{2n+\nu}(z) }{ n!(a)_{n} }=
\frac{(z/2)^{\nu}}{\Gamma(\nu+1)}\,
F_{1:1;0}^{0:0;0}\left(
\begin{array}{rr}
: & ;; \\
\left(\nu+1:2,1\right): & \left(a:1\right);;
\end{array}\frac{t z^{2}}{4},\frac{z^{2}}{4}\right)
\]
where $I_{2n+\nu}$ is the modified Bessel function.
Many more series identities can be deduced from (\ref{eq:Xgener-1})
if one notes that, for $\zeta_{2}=-\zeta_{3}$, the Kamp\'{e}
de F\'{e}riet function $F_{1:1;0}^{0:1;0}$ reduces to
the hypergeometric function $_{1}F_{2}$
\cite[\S 3.2]{Kim14}.
\end{exam}
\begin{prop}\label{prop:Xp}
Let $\zeta=(\zeta_{1},\zeta_{2},\zeta_{3})\in\C^{3}$
and define the differential operators 
$\partial_{j}=\partial/\partial\zeta_{j}$,
$D_{j}=\zeta_{j}\partial_{j}$, $\forall j=1,2,3$.
Then,
$\forall a\in\C-\N$, $\forall b\in\C-\{-n\co n\in\N_{0}\}$,
\begin{align}
\partial_{1}X^{\prime}(a,b;\zeta)=&
\frac{a}{b}\zeta_{2}\left(1+D_{1}\right)
X^{\prime}(a+1,b+1;\zeta) \nonumber \\
&+\frac{a(a+1)}{b}\mrm{H}_{10}(a+2;b+1;\zeta_{1},\zeta_{3}),
\label{eq:pd1Xp} \\
\partial_{2}X^{\prime}(a,b;\zeta)=&
\frac{a}{b}\zeta_{1}\left(1+D_{2}\right)
X^{\prime}(a+1,b+1;\zeta), \label{eq:pd2Xp} \\
\partial_{3}X^{\prime}(a,b;\zeta)=&
\frac{1}{a-1}X^{\prime}(a-1,b;\zeta), \label{eq:pd3Xp} \\
0=&\left(\left(D_{3}+D_{2}-2D_{1}+1-a\right)\partial_{3}+1
\right)X^{\prime}(a,b;\zeta). \label{eq:diffXp}
\end{align}
\end{prop}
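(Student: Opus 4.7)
The plan is to prove all four identities by direct term-by-term differentiation of the defining triple series (\ref{eq:Xp}), relying only on elementary Pochhammer manipulations $(a)_{k+1}=a(a+1)_{k}$ and $(a-1)_{k+1}=(a-1)(a)_{k}$, together with index shifts of the summation variables.

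Equations (\ref{eq:pd2Xp}) and (\ref{eq:pd3Xp}) reduce to bookkeeping. For (\ref{eq:pd3Xp}), I would differentiate in $\zeta_{3}$ and shift $p\mapsto p+1$; the resulting $(a)_{2m-n-p-1}$ is rewritten as $(a-1)_{2m-n-p}/(a-1)$, giving precisely $X'(a-1,b;\zeta)/(a-1)$. For (\ref{eq:pd2Xp}), differentiate in $\zeta_{2}$ and shift both $n\mapsto n+1$ and $m\mapsto m+1$ (the latter needed to preserve the constraint $m\ge n$ encoded by $(m-n)!$); extracting $a/b$ from $(a)_{2m+1-n-p}/(b)_{m+1}=a(a+1)_{2m-n-p}/[b(b+1)_{m}]$ leaves a combinatorial prefactor $n+1$ that is exactly the effect of $1+D_{2}$ acting on $X'(a+1,b+1;\zeta)$.

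The main step is (\ref{eq:pd1Xp}). Differentiating in $\zeta_{1}$ and shifting $m\mapsto m+1$ broadens the summation range to $0\le n\le m+1$, one unit wider than the natural $X'$ range, because $(m+1-n)!$ now admits $n=m+1$. The portion $1\le n\le m+1$ reassembles, via $n\mapsto n-1$ (which pulls out a factor $\zeta_{2}$) and the same Pochhammer identities, into $\tfrac{a}{b}\zeta_{2}(1+D_{1})X'(a+1,b+1;\zeta)$, the surviving $(m+1)$ factor accounting for $1+D_{1}$. The residual $n=0$ slice is a double sum over $(m,p)$ whose general term, once $a(a+1)/b$ is factored out of $(a)_{2m+2-p}/(b)_{m+1}$, is exactly the coefficient $(a+2)_{2m-p}/[(b+1)_{m}\,m!\,p!]$ of $H_{10}(a+2;b+1;\zeta_{1},\zeta_{3})$. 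The anticipated obstacle is cleanly separating this boundary slice from the bulk: it is precisely the $n=m+1$ boundary of the shifted index range that produces the inhomogeneous $H_{10}$ summand, and it is easy to miscount when collapsing factorials.

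Finally, (\ref{eq:diffXp}) follows from (\ref{eq:pd3Xp}). Since $D_{1},D_{2},D_{3}$ act on the series by multiplication by the respective exponents $m,n,p$, applying $(D_{3}+D_{2}-2D_{1}+1-a)$ to $X'(a-1,b;\zeta)$ introduces the scalar $-(a-1+2m-n-p)$ in front of $(a-1)_{2m-n-p}$. The telescoping identity $(a-1+k)(a-1)_{k}=(a-1)_{k+1}=(a-1)(a)_{k}$ collapses this to $-(a-1)(a)_{2m-n-p}$, giving $(D_{3}+D_{2}-2D_{1}+1-a)X'(a-1,b;\zeta)=-(a-1)X'(a,b;\zeta)$. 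Substituting $X'(a-1,b;\zeta)=(a-1)\partial_{3}X'(a,b;\zeta)$ from (\ref{eq:pd3Xp}) then yields (\ref{eq:diffXp}).
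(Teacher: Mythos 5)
Your proposal is correct, and for (\ref{eq:pd1Xp})--(\ref{eq:pd3Xp}) it coincides with what the paper does (the paper only says to ``explore the definition and elementary properties of the Pochhammer symbol''); in particular the slice left over after the shift $m\mapsto m+1$ in $\partial_{1}X^{\prime}$ is indeed
\[
\sum_{m,p\geq 0}\frac{\zeta_{1}^{m}\zeta_{3}^{p}(a)_{2m+2-p}}{m!\,p!\,(b)_{m+1}}
=\frac{a(a+1)}{b}\,\mrm{H}_{10}(a+2;b+1;\zeta_{1},\zeta_{3}).
\]
One wording slip there: your closing sentence attributes the inhomogeneous term to the ``$n=m+1$ boundary,'' but after the reindexing $n\mapsto n-1$ the range $1\leq n\leq m+1$ maps exactly onto the natural range $0\leq n\leq m$ and the leftover is the bottom slice $n=0$ --- which is what your displayed computation actually (and correctly) uses; the $n=m+1$ term alone would not produce $\mrm{H}_{10}$. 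Where you genuinely diverge from the paper is (\ref{eq:diffXp}). The paper stresses that the implication from the first three relations is not obvious and proves (\ref{eq:diffXp}) by importing a contiguous differential relation for the confluent Horn function, $\bigl((D_{3}+1-a)\partial_{3}+1\bigr)\mrm{H}_{10}=(n+2D_{1})\partial_{3}\mrm{H}_{10}$, and applying the operator to the rearranged representation (\ref{eq:Xp2}). You instead check directly on the defining triple series that $D_{3}+D_{2}-2D_{1}+1-a$ multiplies the general term of $X^{\prime}(a-1,b;\zeta)$ by $-(a-1+2m-n-p)$, and the identity $(a-1+k)(a-1)_{k}=(a-1)(a)_{k}$ collapses this to $-(a-1)X^{\prime}(a,b;\zeta)$; combined with (\ref{eq:pd3Xp}) this yields (\ref{eq:diffXp}) in two lines. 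Your route is more elementary and self-contained --- it needs neither the external $\mrm{H}_{10}$ relation nor the convergence domain of (\ref{eq:Xp2}), only the same term-by-term manipulations already sanctioned for (\ref{eq:pd1Xp})--(\ref{eq:pd3Xp}) --- while the paper's route has the merit of exhibiting (\ref{eq:diffXp}) as inherited from the known partial differential equation satisfied by $\mrm{H}_{10}$.
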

\begin{proof}
To prove equations~(\ref{eq:pd1Xp})--(\ref{eq:pd3Xp}),
explore the definition~(\ref{eq:Xp}) and elementary properties of Pochhammer symbol.
The implication (\ref{eq:pd1Xp})--(\ref{eq:pd3Xp})$\Rightarrow$(\ref{eq:diffXp})
is not obvious, and thus the proof of (\ref{eq:diffXp}) will be outlined below.

The easiest way to show (\ref{eq:diffXp}) is to begin with the
equation 
\begin{align*}
\left(\left(D_{3}+1-a\right)\partial_{3}+1\right)&
\mrm{H}_{10}(a+n;b+n;\zeta_{1},\zeta_{3}) \\
&=(n+2D_{1})\partial_{3}\mrm{H}_{10}(a+n;b+n;\zeta_{1},\zeta_{3})
\end{align*}
which proceeds from \cite[\S 5.9, Eq.~(41)]{Erdelyi53},
\cite[Appendix~A.2, Eq.~(A.19$^{\prime}$)]{Debiard03}. Then,
apply the operator $\left(D_{3}+1-a\right)\partial_{3}+1$
to (\ref{eq:Xp2}) to obtain the expression
\begin{align*}
\left(\left(D_{3}+1-a\right)\partial_{3}+1\right)&
X^{\prime}(a,b;\zeta)=\partial_{3}
\sum_{n=1}^{\infty}\frac{ n(\zeta_{1}\zeta_{2})^{n}
(a)_{n} }{ (b)_{n} }\mrm{H}_{10}(a+n;b+n;\zeta_{1},\zeta_{3}) \\
&+2\zeta_{1}\partial_{3}
\sum_{n=0}^{\infty}\frac{ (\zeta_{1}\zeta_{2})^{n}
(a)_{n} }{ (b)_{n} }\partial_{1}
\mrm{H}_{10}(a+n;b+n;\zeta_{1},\zeta_{3}).
\end{align*}
The first sum on the right-hand side is just $D_{2}
X^{\prime}$, as it is seen from (\ref{eq:Xp2}). In the second
sum, substitute 
\[
\zeta_{1}^{n}\partial_{1}\mrm{H}_{10}=
\partial_{1}\left(\zeta_{1}^{n}\mrm{H}_{10}\right)-
n\zeta_{1}^{n-1}\mrm{H}_{10}
\]
and get that the sum ends up as 
$\partial_{1}X^{\prime}-\frac{1}{\zeta_{1}}
D_{2}X^{\prime}$. Hence,
\begin{align*}
\left(\left(D_{3}+1-a\right)\partial_{3}+1\right)
X^{\prime}(a,b;\zeta)=&\partial_{3}D_{2}
X^{\prime}(a,b;\zeta) \\
&+2\zeta_{1}\partial_{3}\left(
\partial_{1}X^{\prime}(a,b;\zeta)-\frac{1}{\zeta_{1}}
D_{2}X^{\prime}(a,b;\zeta)\right).
\end{align*}
This proves (\ref{eq:diffXp}).
\end{proof}
\begin{cor}\label{cor:Xp}
Let $\zeta=(\zeta_{1},\zeta_{2},\zeta_{3})\in\C^{3}$.
Then, $\forall a\in\C-\N$, $\forall b\in\C-\{-n\co n\in\N_{0}\}$,
\begin{equation}
0=
\sum_{n=1}^{\infty}\frac{ (\zeta_{1}\zeta_{2})^{n}
(a)_{n} }{ (b)_{n} }\left(X^{\prime}(a+n,b+n;\zeta)-
n\mrm{H}_{10}(a+n;b+n;\zeta_{1},\zeta_{3})\right),
\label{eq:Xpgener}
\end{equation}
provided each series involved is absolutely convergent.
\end{cor}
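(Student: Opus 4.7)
The proof of Corollary~\ref{cor:Xp} should mirror, step for step, the argument used for Corollary~\ref{cor:X}: the asymmetry between $X'$ and $\mrm{H}_{10}$ in (\ref{eq:Xpgener}) is precisely the asymmetry between the ``recurrence'' expression (\ref{eq:recurrXp}) for $\partial_{2}X'$ and the ``series'' expression obtained by termwise differentiation of (\ref{eq:Xp2}). The plan is therefore to compute $\partial_{2}X'(a,b;\zeta)$ in two different ways and to identify the resulting series.

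First, iterate (\ref{eq:pd2Xp}), that is, expand $(1+D_{2})X'(a+1,b+1;\zeta)=X'(a+1,b+1;\zeta)+\zeta_{2}\partial_{2}X'(a+1,b+1;\zeta)$ and substitute (\ref{eq:pd2Xp}) again into the resulting $\partial_{2}$ term, and so on. Collecting terms yields
\[
\partial_{2}X'(a,b;\zeta)=\frac{a\zeta_{1}}{b}\sum_{n=0}^{\infty}\frac{(a+1)_{n}(\zeta_{1}\zeta_{2})^{n}}{(b+1)_{n}}X'(a+1+n,b+1+n;\zeta).
\]
Next, apply (\ref{eq:recurrXp}) at index $a+n,b+n$ to rewrite $X'(a+1+n,b+1+n;\zeta)=\frac{b+n}{(a+n)\zeta_{1}\zeta_{2}}\bigl(X'(a+n,b+n;\zeta)-\mrm{H}_{10}(a+n;b+n;\zeta_{1},\zeta_{3})\bigr)$. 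After routine Pochhammer simplification of the prefactor to $(a)_{n}/[\zeta_{2}(b)_{n}](\zeta_{1}\zeta_{2})^{n}$, the $\mrm{H}_{10}$-sum collapses by (\ref{eq:Xp2}) to $X'(a,b;\zeta)/\zeta_{2}$, leaving
\[
\partial_{2}X'(a,b;\zeta)=\frac{1}{\zeta_{2}}\sum_{n=1}^{\infty}\frac{(a)_{n}(\zeta_{1}\zeta_{2})^{n}}{(b)_{n}}X'(a+n,b+n;\zeta).
\]

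Second, differentiate the series (\ref{eq:Xp2}) directly with respect to $\zeta_{2}$ (the $\mrm{H}_{10}$-factors do not depend on $\zeta_{2}$), which gives
\[
\partial_{2}X'(a,b;\zeta)=\frac{1}{\zeta_{2}}\sum_{n=1}^{\infty}\frac{n(a)_{n}(\zeta_{1}\zeta_{2})^{n}}{(b)_{n}}\mrm{H}_{10}(a+n;b+n;\zeta_{1},\zeta_{3}).
\]
Equating the two expressions and multiplying by $\zeta_{2}$ yields (\ref{eq:Xpgener}).

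The only delicate point is justifying (i) the interchange of summation and differentiation used for termwise $\partial_{2}$ of (\ref{eq:Xp2}), (ii) the rearrangement implicit in the iteration of (\ref{eq:pd2Xp}), and (iii) the regrouping carried out when the $\mrm{H}_{10}$-piece is recognised as $\zeta_{2}^{-1}X'(a,b;\zeta)$. All three steps are legitimate under the hypothesis of absolute convergence of every series involved, which is precisely the proviso stated in the corollary; this is exactly the analogue of the parenthetical ``provided each series involved is absolutely convergent'' that appears in Corollary~\ref{cor:X}. There is no deeper combinatorial obstruction, since the proof is entirely parallel to the one already given for $X$; the only mildly new ingredient compared to Corollary~\ref{cor:X} is the Pochhammer simplification $a(a+1)_{n}(b+n)/[b(b+1)_{n}(a+n)]=(a)_{n}/(b)_{n}$ that produces the exact factor appearing in the stated identity.
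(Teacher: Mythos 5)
Your proposal is correct and follows essentially the same route as the paper: the paper's proof of Corollary~\ref{cor:Xp} likewise computes $\partial_{2}X^{\prime}(a,b;\zeta)$ in two ways --- once by iterating (\ref{eq:pd2Xp}) together with the recurrence (\ref{eq:recurrXp}) and the representation (\ref{eq:Xp2}) to obtain $\zeta_{2}^{-1}\sum_{n\geq1}(\zeta_{1}\zeta_{2})^{n}(a)_{n}(b)_{n}^{-1}X^{\prime}(a+n,b+n;\zeta)$, and once by termwise differentiation of (\ref{eq:Xp2}) --- and then equates the two, exactly as in Corollary~\ref{cor:X}. Your explicit verification of the Pochhammer cancellation and of the collapse of the $\mrm{H}_{10}$-sum to $\zeta_{2}^{-1}X^{\prime}(a,b;\zeta)$ fills in details the paper leaves implicit, but introduces nothing different in substance.
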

\begin{proof}
Similar to the proof of Corollary~\ref{cor:X},
exploit (\ref{eq:pd2Xp}) to obtain
\[
\partial_{2}X^{\prime}(a,b;\zeta)=
\frac{1}{\zeta_{2}}
\sum_{n=1}^{\infty}\frac{ (\zeta_{1}\zeta_{2})^{n}
(a)_{n} }{ (b)_{n} }X^{\prime}(a+n,b+n;\zeta).
\]
But
\[
\partial_{2}X^{\prime}(a,b;\zeta)=
\frac{1}{\zeta_{2}}
\sum_{n=1}^{\infty}\frac{ n(\zeta_{1}\zeta_{2})^{n}
(a)_{n} }{ (b)_{n} }\mrm{H}_{10}(a+n;b+n;\zeta_{1},\zeta_{3})
\]
by (\ref{eq:Xp2}); hence the result.
\end{proof}
\section{Series representation for Green's function}\label{sec:7}
The diagonal terms $G_{2}\pm\beta G_{1}$
of Green's function $\mathcal{G}_{R}$
(\ref{eq:Green}) obey the series representation due to
Theorem~\ref{thm:Integral}. The series representation for
the off-diagonal entries $\pm\alpha D_{\pm}G_{1}$
can be obtained from Theorem~\ref{thm:Integral} 
and Propositions~\ref{prop:X} and \ref{prop:Xp}.

In this paragraph, we shall concentrate on the functions 
$D_{\pm}G_{1}$ and their particular values in the limit
$r\to0$, $\alpha\to0$, and $\beta\to0$, where as before,
$r=\abs{x}$, $x=(x_{1},x_{2},x_{3})\in\R^{3}$.

With the parameters as in Theorem~\ref{thm:Integral},
$\forall j=1,2,3$,
\[
\frac{\pd}{\pd x_{j}}G_{1}(x)=
\frac{1}{8\pi\sqrt{-\zeta}}\frac{\pd}{\pd x_{j}}
X^{\prime}\left(\frac{1}{2},\frac{3}{2};v\right)-
\frac{x_{j}}{8\pi r}X\left(\frac{3}{2},\frac{3}{2};u\right)
-\frac{r}{8\pi}\frac{\pd}{\pd x_{j}}
X\left(\frac{3}{2},\frac{3}{2};u\right)
\]
where the triplets $u=(u_{1},u_{2},u_{3})\in\C^{3}$,
$v=(v_{1},v_{2},v_{3})\in \C^{3}$ are given by 
\[
u=\left(\frac{\beta^{2}r^{4}}{64},
-\frac{\alpha^{2}r^{2}}{16},
-\frac{\zeta r^{2}}{4}\right),\quad 
v=\left(\frac{\beta^{2}}{4\zeta^{2}},
-\frac{\zeta \alpha^{2}}{\beta^{2}},
\frac{\zeta r^{2}}{4}\right).
\]
By (\ref{eq:pd3Xp}) in Proposition~\ref{prop:Xp},
\[
\frac{\pd}{\pd x_{j}}X^{\prime}\left(\frac{1}{2},\frac{3}{2};v\right)
=\frac{\pd v_{3}}{\pd x_{j}}
\frac{\pd}{\pd v_{3}}X^{\prime}\left(\frac{1}{2},\frac{3}{2};v\right)
=-\zeta x_{j}X^{\prime}\left(-\frac{1}{2},\frac{3}{2};v\right).
\]
By (\ref{eq:pd1X}) and (\ref{eq:pd3X}) in Proposition~\ref{prop:X},
\begin{align*}
\frac{\pd}{\pd x_{j}}X\left(\frac{3}{2},\frac{3}{2};u\right)
=&
\sum_{k=1}^{3}\frac{\pd u_{k}}{\pd x_{j}}
\frac{\pd}{\pd u_{k}}X\left(\frac{3}{2},\frac{3}{2};u\right) \\
=&
\frac{x_{j}}{2}\left(
\frac{\beta^{2}r^{2}}{45}X\left(\frac{7}{2},\frac{5}{2};u\right)
-\frac{2\zeta}{3}X\left(\frac{5}{2},\frac{3}{2};u\right)-
\frac{\alpha^{2}}{4}\pd_{2}X\left(\frac{3}{2},\frac{3}{2};u\right)
\right)
\end{align*}
where $\pd_{2}\equiv\pd/\pd u_{2}$ and,
by (\ref{eq:pd2X-0}),
\begin{equation}
\pd_{2}X\left(\frac{3}{2},\frac{3}{2};u\right)=
\sum_{n=1}^{\infty}\frac{ n u_{2}^{n-1} }{ 
\left(\frac{3}{2}\right)_{n}^{2} }\,
F_{1:1;0}^{0:0;0}\left(
\begin{array}{rr}
: & ;; \\
\left(\frac{3}{2}+n:2,1\right): & 
\left(\frac{3}{2}+n:1\right);;
\end{array}u_{1},u_{3}\right)
\label{eq:pd2X-1}
\end{equation}
and the series is absolutely convergent.

Combining all together,
\begin{align}
D_{\pm}G_{1}(x;\alpha,\beta,\zeta)=&
\left(\frac{\pd}{\pd x_{1}}\pm\im\frac{\pd}{\pd x_{2}}
\right)G_{1}(x;\alpha,\beta,\zeta) \nonumber \\
=&\frac{x_{1}\pm\im x_{2}}{8\pi}\left(
\sqrt{-\zeta}X^{\prime}\left(-\frac{1}{2},\frac{3}{2};
\frac{\beta^{2}}{4\zeta^{2}},
-\frac{\zeta \alpha^{2}}{\beta^{2}},
\frac{\zeta r^{2}}{4}\right)
\right. \nonumber \\
&-\frac{1}{r}X\left(\frac{3}{2},\frac{3}{2};
\frac{\beta^{2}r^{4}}{64},
-\frac{\alpha^{2}r^{2}}{16},
-\frac{\zeta r^{2}}{4}\right) \nonumber \\
&-\frac{r}{2}\left[
\frac{\beta^{2}r^{2}}{45}
X\left(\frac{7}{2},\frac{5}{2};
\frac{\beta^{2}r^{4}}{64},
-\frac{\alpha^{2}r^{2}}{16},
-\frac{\zeta r^{2}}{4}\right)
\right. \nonumber \\
&-\frac{2\zeta}{3}
X\left(\frac{5}{2},\frac{3}{2};
\frac{\beta^{2}r^{4}}{64},
-\frac{\alpha^{2}r^{2}}{16},
-\frac{\zeta r^{2}}{4}\right) \nonumber \\
&\left.\left.-\frac{\alpha^{2}}{4}
\pd_{2}X\left(\frac{3}{2},\frac{3}{2};
\frac{\beta^{2}r^{4}}{64},
-\frac{\alpha^{2}r^{2}}{16},
-\frac{\zeta r^{2}}{4}\right)
\right]\right)
\label{eq:DpmG1}
\end{align}
where $\pd_{2}X$ is given by (\ref{eq:pd2X-1});
the parameters are as in Lemmas~\ref{lem:X}--\ref{lem:Xp}.

1. Let $U_{\varepsilon}$ be an 
$\varepsilon$-neighborhood of the origin
$0\in\R^{3}$. By (\ref{eq:DpmG1}), 
\begin{equation}
D_{\pm}G_{1}(x;\alpha,\beta,\zeta)=
-\frac{\hat{x}_{1}\pm\im \hat{x}_{2}}{8\pi},\quad
\forall x\in U_{\varepsilon},\quad
\hat{x}_{j}=\frac{x_{j}}{r},\quad j=1,2
\label{eq:DpmG1origin}
\end{equation}
for $\varepsilon>0$ sufficiently small. Turns out that
the off-diagonal entries $\pm\alpha D_{\pm}G_{1}(x)$
of Green's function are well-defined $\forall x\in\R^{3}-\{0\}$.

2. By (\ref{eq:DpmG1}), in the limit $\alpha\to0$,
\begin{align}
D_{\pm}G_{1}(x;0,\beta,\zeta)=&
\frac{x_{1}\pm\im x_{2}}{8\pi}
\left(
\sqrt{-\zeta}\mrm{H}_{10}\left(-\frac{1}{2};\frac{3}{2};
\frac{\beta^{2}}{4\zeta^{2}},\frac{\zeta r^{2}}{4}\right)
\right. \nonumber \\
&-\frac{1}{r}F_{1:1;0}^{0:0;0}\left(
\begin{array}{rr}
: & ;; \\
\left(\frac{3}{2}:2,1\right): & 
\left(\frac{3}{2}:1\right);;
\end{array}\frac{\beta^{2}r^{4}}{64},
-\frac{\zeta r^{2}}{4}\right) \nonumber \\
&-\frac{r}{2}\left[
\frac{\beta^{2}r^{2}}{45}
F_{1:1;0}^{0:0;0}\left(
\begin{array}{rr}
: & ;; \\
\left(\frac{7}{2}:2,1\right): & 
\left(\frac{5}{2}:1\right);;
\end{array}\frac{\beta^{2}r^{4}}{64},
-\frac{\zeta r^{2}}{4}\right)
\right. \nonumber \\
&\left.\left.-\frac{2\zeta}{3}
F_{1:1;0}^{0:0;0}\left(
\begin{array}{rr}
: & ;; \\
\left(\frac{5}{2}:2,1\right): & 
\left(\frac{3}{2}:1\right);;
\end{array}\frac{\beta^{2}r^{4}}{64},
-\frac{\zeta r^{2}}{4}\right)\right]\right)
\label{eq:DpmG1alpha0}
\end{align}
$\forall\beta\geq0$, $\forall\zeta\in\C-
\left[-\beta,\infty\right)$, $\abs{\zeta}>\beta$,
$\forall x\in\R^{3}-\{0\}$. In view of 
\[
\mrm{H}_{10}\left(-\frac{1}{2};\frac{3}{2};
\zeta_{1},\zeta_{2}\right)=
\sum_{\sigma=\pm1}\frac{ \sigma\left(1+2\sigma 
\sqrt{\zeta_{1}} \right)^{\frac{3}{2}} }{ 6\sqrt{\zeta_{1}} } 
\,_{0}F_{1}\left(;\frac{5}{2};-\zeta_{2}
\left(1+2\sigma\sqrt{\zeta_{1}}\right)\right)
\]
$\forall\abs{\zeta_{1}}<\frac{1}{4}$, the additional
condition $\abs{\zeta}>\beta$ in (\ref{eq:DpmG1alpha0})
can be omitted. 
Note that $D_{\pm}G_{1}(x;0,\beta,\zeta)$
is also easy to obtain from (\ref{eq:G1alpha0-2}):
\begin{equation}
D_{\pm}G_{1}(x;0,\beta,\zeta)=
\frac{x_{1}\pm\im x_{2}}{8\pi\beta r^{2}}
\left(
\left(\sqrt{\beta-\zeta}+\frac{1}{r}\right)
\e^{ -r\sqrt{\beta-\zeta} }
-\left(\sqrt{-\beta-\zeta}+\frac{1}{r}\right)
\e^{ -r\sqrt{-\beta-\zeta} }\right)
\label{eq:DpmG1alpha0-1}
\end{equation}
$\forall\beta\geq0$, $\forall\zeta\in\C-
\left[-\beta,\infty\right)$, $\forall x\in\R^{3}-\{0\}$.

By (\ref{eq:DpmG1alpha0})--(\ref{eq:DpmG1alpha0-1}),
the Green's function $\mathcal{G}_{R}(x;0,\beta,\zeta)$ is diagonal.

3. In the limit $\beta\to0$,
$\forall\alpha\geq0$, $\forall\zeta\in\C-\left[
\frac{1}{4}\alpha^{2},\infty\right)$, $\abs{\zeta}>
\frac{1}{4}\alpha^{2}$, $\forall x\in\R^{3}-\{0\}$,
\begin{align}
D_{\pm}G_{1}(x;\alpha,0,\zeta)=&
\frac{x_{1}\pm\im x_{2}}{8\pi}
\left(
\sqrt{-\zeta}\mrm{H}_{3}\left(-\frac{1}{2},1;\frac{3}{2};
-\frac{\alpha^{2}}{4\zeta},\frac{\zeta r^{2}}{4}\right)
\right. \nonumber \\
&-\frac{1}{r}F_{1:1;0}^{0:1;0}\left(
\begin{array}{rr}
: & 1;; \\
\frac{3}{2}: & 
\frac{3}{2};;
\end{array}-\frac{\alpha^{2}r^{2}}{16},
-\frac{\zeta r^{2}}{4}\right) \nonumber \\
&+\frac{r}{2}\left[
\frac{2\zeta}{3}
F_{1:1;0}^{0:1;0}\left(
\begin{array}{rr}
: & 1;; \\
\frac{5}{2}: & 
\frac{3}{2};;
\end{array}-\frac{\alpha^{2}r^{2}}{16},
-\frac{\zeta r^{2}}{4}\right) \right. \nonumber \\
&\left.\left.+\frac{\alpha^{2}}{9}
F_{1:1;0}^{0:1;0}\left(
\begin{array}{rr}
: & 2;; \\
\frac{5}{2}: & 
\frac{5}{2};;
\end{array}-\frac{\alpha^{2}r^{2}}{16},
-\frac{\zeta r^{2}}{4}\right)\right]\right).
\label{eq:DpmG1beta0}
\end{align}
Taking both $\alpha=\beta=0$, one finds from
(\ref{eq:DpmG1alpha0})--(\ref{eq:DpmG1beta0}) that
\[
D_{\pm}G_{1}(x;0,0,\zeta)=
-\frac{\hat{x}_{1}\pm\im \hat{x}_{2}}{8\pi}
\e^{-r\sqrt{-\zeta}}
\]
$\forall\zeta\in\C-\left[0,\infty\right)$,
$\forall x\in\R^{3}-\{0\}$. 
\section{Discussion}\label{sec:8}
The series representation of Green's function is 
well-suited for further spectral analysis of $H$ (\ref{eq:Hamilton}). 
Suppose that $H_{0}$ is the operator $H$ restricted to the set of
compactly supported smooth functions that vanish at the origin.
The operator $H_{0}$ is symmetric but not self-adjoint. Self-adjoint
extensions of $H_{0}$, say $\tilde{H}$, can be found by applying the 
singular perturbation theory \cite{Albeverio00}. From this point
of view, $H$ is a trivial extension of $H_{0}$. The extensions incorporate 
the operators that are usually referred to as the Hamiltonians with
point-interaction. In physical applications, for example in ultracold
atomic gases, operator $\tilde{H}$ would describe the 
spin-orbit coupled Hamiltonian considered in the presence of magnetic
field with the impurity scattering treated via the zero-range interaction.
Without spin-orbit coupling, that is, for $\alpha=0$, self-adjoint
extensions and their spectral properties have been examined in
\cite{Cacciapuoti09}; in this case, the ability to obtain exact eigenvalues 
is clear from a simple structure of Green's function, see (\ref{eq:G1alpha0-2}) 
and (\ref{eq:G2alpha0-2}). For a general $\alpha\geq0$, however,
it is convenient to represent the resolvent of $\tilde{H}$ in terms of Krein's 
$Q$-matrix function. One can show that,
since $D_{\pm}G_{1}(x;\alpha,\beta,\zeta)$ is independent
of $\zeta$ (\ref{eq:zeta}) in the neighborhood of the origin
$0\in\R^{3}$ (\ref{eq:DpmG1origin}), 
the $Q$-matrix function is made up of only
$G_{1}(0;\alpha,\beta,\zeta)$ (\ref{eq:Spec1}) and 
$G_{2}^{\mrm{ren}}(0;\alpha,\beta,\zeta)$
(\ref{eq:Spec2}). The analysis of these functions 
leads to the transcendental equation
with respect to the eigenvalue of $\tilde{H}$. 
The results subsequent to the algebraic treatment of the present discussion
will be announced elsewhere.
\section*{Acknowledgement}
It is a pleasure to thank the anonymous referees for
their valuable comments and critical remarks which have
helped to improve the present version of the manuscript.
The research was partially funded by 
the European Social Fund under the Global Grant measure.
\bibliographystyle{elsarticle-num}

\end{document}